\def\BibTeX{{\rm B\kern-.05em{\sc i\kern-.025em b}\kern-.08em
    T\kern-.1667em\lower.7ex\hbox{E}\kern-.125emX}}
\definecolor{janniks_1A_blau}{RGB}{0,114,190}
\definecolor{janniks_1A_rot}{RGB}{218,83,25}
\definecolor{janniks_1A_grun}{RGB}{119,173,48}
\definecolor{richards_1A_blau}{RGB}{197,217,241}
\definecolor{richards_1A_rot}{RGB}{192,0,0}
\definecolor{richards_1A_grun}{RGB}{146,208,80}
\newcommand{\CNET}{CNET}
\newcommand{\CSYS}{CSYS}
\newcommand{\T}{\intercal}
\newcommand{\Hc}{{H}}
\newcommand{\Hj}{{\hat{H}}}
\newcommand{\HR}{{N}}
\newcommand{\B}[1]{\mathbb{B} \left[ #1 \right]}
\newcommand{\CE}[2]{\mathbb{E} \mathopen{} \left[ \mathclose{} #1 \, \middle| \, #2 \right]}
\newcommand{\tr}[1]{\tilde{#1}}
\newcommand{\e}[1]{\bar{#1}}
\newcommand{\diag}[2]{ \operatorname{diag}_{#1} \left\{ #2 \right\} }
\newcommand{\s}[2][]{^{#2^{#1}}}
\newcommand{\norm}[1]{\left\lVert#1\right\rVert}
\newcommand{\setdef}[2]{\left\{ #1 \left\vert #2 \right.\right\}}
\newcommand{\tbf}[2][]{\mathbf{#1\tilde{#2}}}
\newcommand{\tbfi}[2][]{\mathbf{#1\tilde{#2}}\s{i}}
\newenvironment{sbm}{\setlength\arraycolsep{2pt}\bmatrix}{\endbmatrix}
\newenvironment{cases2}{\left. \begin{cases}}{\end{cases} \right\}}
\newenvironment{salign}{\thinmuskip=1mu\medmuskip=2mu\thickmuskip=3mu\align}{\endalign}
\newcommand{\figref}[1]{Fig.~\ref{#1}}
\newcommand{\secref}[1]{Section~\ref{#1}}
\newcommand{\Qxi}{{Q_x\s{i}}}
\newcommand{\QxiT}{{Q_x\s[\T]{i}}}
\newcommand{\Qui}{{Q_u\s{i}}}
\newcommand{\QuiT}{{Q_u\s[\T]{i}}}
\newcommand{\dum}{z}            
\newcommand{\aoi}{a}
\newcommand{\Ni}{\mathcal{N}\s{i}}
\newcommand{\nc}{n}                 
\newcommand{\zu}{\null\\}           
\newcommand{\kp}[1]{}
\DeclareFontFamily{U}{mathb}{}
\DeclareFontShape{U}{mathb}{m}{n}{
	<-5.5> mathb5
	<5.5-6.5> mathb6
	<6.5-7.5> mathb7
	<7.5-8.5> mathb8
	<8.5-9.5> mathb9
	<9.5-11.5> mathb10
	<11.5-> mathbb12
}{}
\theoremstyle{plain}
\newtheorem{theorem}{Theorem}
\newtheorem{assume}[theorem]{Assumption}
\newtheorem{definition}[theorem]{Definition}
\begin{document}
\title{Using AoI Forecasts in Communicating and Robust Distributed Model-Predictive Control}

\author{J.~Hahn$^1$,
	R.~Schoeffauer$^2$,
	G.~Wunder$^2$,
	O.~Stursberg$^1$. 
	\thanks{Submitted to review on \today.}
	\thanks{This work is financially supported in part by the German Research Foundation (DFG) within the research priority program SPP1914 -- \textit{Cyberphysical Networking}.}
	\thanks{$^{1}$Control and System Theory, Dept. of Electrical Engineering and Computer Science, University of Kassel, Germany; \{\tt{jhahn, stursberg\}@uni-kassel.de}}%
	\thanks{$^2$Heisenberg CIT Group, Free University of Berlin, Dept. of Mathematics and Computer Science, Germany; \tt{richard.schoeffauer@fu-berlin.de}}
}


\maketitle

\begin{abstract}
	In order to enhance the performance of cyber-physical systems, this paper proposes the integrated design of distributed controllers for distributed plants and the control of the communication network. Conventional design methods use static interfaces between both entities and therefore rely on worst-case estimations of communication delay, often leading to conservative behavior of the overall system. By contrast, the present approach establishes a robust distributed model-predictive control scheme, in which the local subsystem controllers operate under the assumption of a variable communication schedule that is predicted by a network controller. Using appropriate models for the communication network, the network controller applies a  predictive network policy for scheduling the communication among the subsystem controllers across the network. Given the resulting time-varying predictions of the age of information, the paper shows under which conditions the subsystem  controllers can robustly stabilize the distributed system. To illustrate the approach, the paper also reports on the application to a vehicle platooning scenario.
\end{abstract}

\begin{IEEEkeywords}
	Age of Information, Distributed Control, Latency, Networked control systems, Optimal scheduling, Predictive control, Robustness.
\end{IEEEkeywords}

\IEEEpeerreviewmaketitle

\section{Introduction}
\label{sec:intro}

Current technological advances in communication technology have lead to systems in which networks connect more and more locally controlled and autonomously operating devices.
Such systems have an impact across a large number of applications,
including networked automobile and traffic systems, smart energy grids, and the next generation of manufacturing plants (\textit{industry 4.0}).
Typically referred to as \emph{cyber-physical systems}, these systems are composed of physical components, digital and computational nodes, as well as the interconnecting communication infrastructure \cite{lee2008cyber}. While traditional engineering concepts follow \textit{divide-and-conquer} principles for separated and largely decoupled design of these components, requirements of high performance, reliability, as well as online and autonomous reconfiguration call for integrated design in which inter-dependencies are carefully taken into account \cite{5995279}.

This paper proposes a new approach of the latter type, tailored to the specific case of combining a wireless communication network with a distributed plant in which subsystems are locally controlled by model predictive controllers (MPC). The controllers aim at establishing cooperation with respect to a common cost functional formulated for the distributed plant,
thus requiring to exchange data between the controllers across a centrally organized communication network with possibly time-varying properties of connectivity, reliability, and latency.

In order to let the controllers adapt to such properties autonomously and online, the use of model predictive controllers is a straightforward choice: they do not only allow to consider the predicted behavior of other controlled entities and constraints for states, inputs, and outputs, but also imperfections in the communication.
In fact, model predictive control without consideration of communication defects has reached a state of considerable maturity, including variants for nonlinear dynamics \cite{GP17,RMM94}, systems with uncertainties \cite{LK00,MSR05,BB12,CF13}, for fast computations \cite{WB10,GC12,ZJM11}, and distributed settings \cite{SV+10,CS+13,CJ+02}.
 
With respect to versions of distributed MPC taking network imperfections into account, solutions have been proposed in \cite{LX+07,
	Gr+14,gross2014distributed}. Common lines in these studies are, however, that network delays are either assumed to be negligibly small compared to the 
dominant plant dynamics \cite{shi2017distributed,
	blasi2018distributed,tangirala2018analysis}, or that an upper bound of the delay (commonly named as \textit{worst case delay}) is assumed to be known \cite{jia2014network,yamchi2017distributed}. Own work in this direction has aimed at devising robust MPC strategies to compensate for the maximum delays \cite{gross2014distributed,GS11}, or to use schemes of event-based communication \cite{GS16}. 

However, explicitly accounting for the network defects by use of a worst case delay within robust control schemes is, most of the time, overly conservative: It can often be observed, that considerable delays do only occur infrequently and accumulated in certain phases, while for longer duration the delay is negligible \cite{ploplys2004closed}.
Consequently, the subsystem controllers (network agents) typically hold much fresher data than what would be expected under the worst case delay.
In addition, a time-varying communication schedule leads to non-uniformly distributed instances of information reception and therefore lends itself to a description via the so-called \textit{age of information} (AoI) metric that measures the time elapsed between generation and reception of information. 
This motivates to develop methods that can obtain and make use of the expected AoI, thus circumventing the static interface between communication and control that a worst case delay typically amounts to.
Assuming that a basic model for the link quality can be obtained (e.g. via machine learning techniques \cite{wang2017, qiu2018}), this paper proposes a model predictive network controller to handle both the management of transmissions as well as the prediction of future AoI, a strategy that is reminiscent of our prior work \cite{schoeffauer2018a,schoeffauer2018b}.
In contrast to these works this paper assumes a flat topology, meaning that there exists a link between each pair of network agents, and hence routing is no longer a problem that the network controller needs to solve. Such a topology is commonly encountered in related literature, where it stands to develop an optimal scheduler in order to minimize the overall age of data in machine-to-machine communication scenarios \cite{Sinha2019, He2018, Hsu2020}.

Recent work in \cite{hahn2018distributed} has sketched the idea of using predictive controllers for both the minimization of network delays, and the control of distributed plants under consideration of predicted AoI. That paper, however, neither detailed the network control scheme, nor the interface between network and subsystem controllers, nor the stability of the overall scheme.
In contrast, the present paper proposes a novel predictive control scheme for the communication network and defines a mathematical interface through which the subsystem controllers can make use of the resulting forecasts of the AoI.
Furthermore, it shows how these forecasts can be used to enhance control performance of the subsystem controllers, and under which conditions robust stability is ensured.

The paper is organized as follows:
\secref{sec:architecture} introduces the general system architecture.
\secref{sec:network} presents the design of a predictive network control scheme that generates delay forecasts.
\secref{sec:controlsystem} presents the design of a distributed predictive control scheme that makes use of the delay forecasts and proofs its stability.
\secref{sec:simulation} illustrates the performance gain, when employing the proposed methods to a vehicle platoon scenario.


\section{Set-Up and Notation}
\label{sec:architecture}

\paragraph*{Set-up}
This paper considers cyber-physical systems composed of two main parts, a distributed control system (\CSYS) comprising a set of locally controlled subsystems, and a communication network (\CNET) over which the local controllers of the subsystems in \CSYS\ can communicate, see \figref{fig::general_system_model}. 
The dynamics of the subsystems are assumed to be decoupled in this setting, i.e. the state of one subsystem may not directly affect the dynamics of another, while different subsystems can impose constraints onto each other, and the behavior of one controlled subsystem may affect the control goal of another. An example in which such dependencies are practically relevant is that of a platoon of autonomous vehicles, as elaborated on in Sec.~\ref{sec:simulation}. It is further assumed that the local controllers are not able to measure the state of an interacting subsystem, thus information on neighbors can solely be obtained by exchange of information through the \CNET.

In the latter, each subsystem controller acts as an agent in the communication network that requests data from and provides data to all other agents. Here, "data"  refers to state information of the subsystems. The agents are connected via links which exhibit individual, time-depending behavior with respect to transmission quality. A centralized network controller manages a schedule, determining when which agent is allowed to broadcast its data to all other agents. In contrast to conventional network control approaches, this paper models the network controller as an MPC, enabling the generation of forecasts of future data transmission. In other words, it becomes possible to inform an agent of when, in terms of a bounded time interval, it will receive data from another agent. It is assumed that these forecasts are communicated to the agents with only a negligible overhead of transmission load through the network.

Accordingly, as indicated in \figref{fig::general_system_model}, the \CNET\ architecture foresees two layers of communication: on one layer the agents broadcast their data to one another; on the other layer, the network controller informs the agents of both, the broadcast schedule (via control orders) and the forecasts.

\begin{figure}[t!]
	\centering
	\begingroup%
	\makeatletter%
	\newcommand*\fsize{\dimexpr\f@size pt\relax}%
	\newcommand*\lineheight[1]{\fontsize{\fsize}{#1\fsize}\selectfont}%
	\ifx\svgwidth\undefined%
	\setlength{\unitlength}{227.33857403bp}%
	\ifx\svgscale\undefined%
	\relax%
	\else%
	\setlength{\unitlength}{\unitlength * \real{\svgscale}}%
	\fi%
	\else%
	\setlength{\unitlength}{\svgwidth}%
	\fi%
	\global\let\svgwidth\undefined%
	\global\let\svgscale\undefined%
	\makeatother%
	\begin{picture}(1,0.78802996)%
		\lineheight{1}%
		\setlength\tabcolsep{0pt}%
		\put(0,0){\includegraphics[width=\unitlength,page=1]{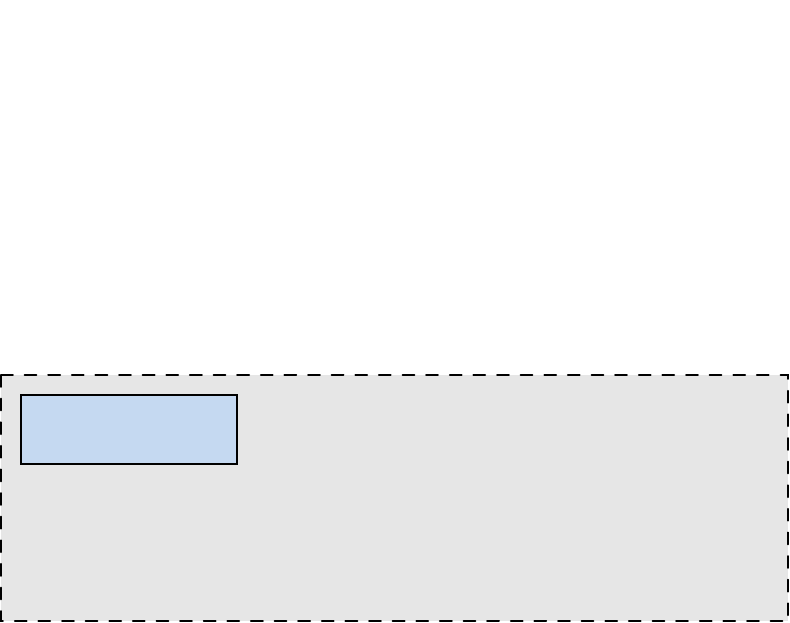}}%
		\put(0.16232196,0.22968467){\makebox(0,0)[t]{\lineheight{1.25}\smash{\begin{tabular}[t]{c}$C_1$\end{tabular}}}}%
		\put(0,0){\includegraphics[width=\unitlength,page=2]{set_up.pdf}}%
		\put(0.16201754,0.12){\makebox(0,0)[t]{\lineheight{1.25}\smash{\begin{tabular}[t]{c}$P_1$\end{tabular}}}}%
		\put(0,0){\includegraphics[width=\unitlength,page=3]{set_up.pdf}}%
		\put(0.49890788,0.22959867){\makebox(0,0)[t]{\lineheight{1.25}\smash{\begin{tabular}[t]{c}$C_2$\end{tabular}}}}%
		\put(0,0){\includegraphics[width=\unitlength,page=4]{set_up.pdf}}%
		\put(0.4986126,0.12){\makebox(0,0)[t]{\lineheight{1.25}\smash{\begin{tabular}[t]{c}$P_2$\end{tabular}}}}%
		\put(0,0){\includegraphics[width=\unitlength,page=5]{set_up.pdf}}%
		\put(0.83550294,0.22963026){\makebox(0,0)[t]{\lineheight{1.25}\smash{\begin{tabular}[t]{c}$C_n$\end{tabular}}}}%
		\put(0,0){\includegraphics[width=\unitlength,page=6]{set_up.pdf}}%
		\put(0.83527092,0.12){\makebox(0,0)[t]{\lineheight{1.25}\smash{\begin{tabular}[t]{c}$P_n$\end{tabular}}}}%
		\put(0,0){\includegraphics[width=\unitlength,page=7]{set_up.pdf}}%
		\put(0.49866521,0.64524049){\makebox(0,0)[t]{\lineheight{1.25}\smash{\begin{tabular}[t]{c}Network Controller\end{tabular}}}}%
		\put(0,0){\includegraphics[width=\unitlength,page=8]{set_up.pdf}}%
		\put(0.19,0.55){\color[rgb]{1,0,0}\makebox(0,0)[lt]{\lineheight{1.25}\smash{\begin{tabular}[t]{l}
						Network data
		\end{tabular}}}}%
		\put(0.57,0.43){\color[rgb]{0,0,1}\rotatebox{13.213929}{\makebox(0,0)[lt]{\lineheight{1.25}\smash{\begin{tabular}[t]{c}
							CSYS \\[-1mm] data
		\end{tabular}}}}}%
		\put(0.01371571,0.73690777){\makebox(0,0)[lt]{\lineheight{1.25}\smash{\begin{tabular}[t]{l}CNET\end{tabular}}}}%
		\put(0.98628433,0.02618455){\makebox(0,0)[rt]{\lineheight{1.25}\smash{\begin{tabular}[t]{r}CSYS\end{tabular}}}}%
	\end{picture}%
	\endgroup%
	\caption{Structure of the cyber-physical system, with local subsystem controllers $C_i$ and plant subsystems $P_i$. 
	}
	\label{fig::general_system_model}
\end{figure}

\paragraph*{Notation}
\CSYS\ and \CNET\ both operate on discrete-time domains with underlying time-steps. 
Any value of a discrete-time signal $\dum$ at time $t_0+k\cdot \Delta t$ is denoted by $\dum_{k}$ with index $k\in \mathbb{N}$ and a constant interval $\Delta t \in \mathbb{R}^+$. A value $\dum_{k+l}$ predicted in time $k$ is indicated by $\dum_{k+l|k}$, where $\dum_{k|k} = \dum_k$. A complete predicted trajectory over a horizon of length $H$ is denoted by $\tilde{\dum}_k^\T= [ \dum_{k|k} ^\T,\ \dum_{k+1|k}^\T,\ \dots,\ \dum_{k+H|k}^\T ]^\T$.
With slight abuse of notation, in some cases the first or last entry in the trajectory is omitted.


The symbol $\mathbf{\dum}_k$ refers to a stacked column vector of signals from different subsystems, and a trajectory of a stacked vector $\mathbf{\dum}_k$ is denoted by $\mathbf{\tilde{\dum}}_k$.

To refer to polytopic constraints for any element of \mbox{$\mathbf{\tilde{\dum}}_k$: $\mathcal{
		\expandafter\MakeUppercase\expandafter{\dum}}=\big\{ \mathbf{\tilde{\dum}}_k \in \mathbb{R}^{\tbf{n}_\dum}\big\vert \tbf{C}_\dum \cdot \tbf{\dum}_k \leq \tbf{b}_\dum\big\}$} the pair  $(\tbf{C}_\dum,\tbf{b}_\dum)$ is used.

Matrices ${0}_{n,m}$ and ${1}_{n,m}$ denote $n \times m$-matrices of zeros, or ones respectively, while a column vector is simpler written as ${0}_n$. For brevity, $\mathbf{0}$ is sometimes used to denote a zero matrix, if the dimensions are clear from the context.


Let an index set $\mathcal{N}=\{1,2,\ldots,n\}\subset\mathbb{N}$ refer to a set of $n$ subsystems. Then, a column vector $\dum\s{i}$, a matrix $A\s{i}$, and a set $\mathcal{A}\s{i}$ indicate variables defined for the subsystem with index $i\in\mathcal{N}$. Furthermore, for the example of \mbox{$\mathcal{N}=\{1,2,3\}$}, the notation $\mbox{diag}\left(\begin{sbm}A\s{j}\end{sbm}, {j\in\mathcal{N}}\right)$ is equivalent to $\mbox{diag}(A\s{1},A\s{2},A\s{3})$, and $\mbox{prod}\left(\mathcal{A}\s{j}, j\in\mathcal{N}\right)$ defines the Cartesian product $\mathcal{A}\s{1}\times\mathcal{A}\s{2}\times\mathcal{A}\s{3}$. In addition, $[\dum\s{j}]_{j\in\mathcal{N}}$ defines the stacked vector $[\dum\s[\T]{1},\ \dum\s[\T]{2},\ \dum\s[\T]{3}]^\T$. 
\section{Communication Network}
\label{sec:network}

The \CNET\ is modeled as a discrete-time packet-based system with $n$ agents (corresponding to $n$ subsystem controllers) and erroneous (wireless) transmissions.
It is assumed that in each time-step each agent requires data from all other agents, and that likewise in each time-step each agent provides a new batch of data that can potentially be broadcast to all other agents. The network resources are assumed to be  limited, and thus agents typically have to work with out-dated data until new information is received. The age of information is pivotal for the agents' performance.
\begin{definition}
	Suppose each batch of data gets a time-stamp when being generated by its agent.
	The age of information $a_k^{ij} \in \mathbb{N}$ is the difference between the current time-step $k$ and the time stamp of the latest batch of data agent $i$ received from agent $j$.
\end{definition} 

\begin{figure}[b!]
	\centering
	\small
	\begin{tikzpicture}[xscale=0.1,yscale=0.1]
		\tikzset{state/.style={draw, fill=richards_1A_blau}};
		
		\node[state, circle] (Aj) at (-10,0) {Agent $j$};
		\node[state, circle] (Ai) at (64,0)    {Agent $i$};
		\draw[every loop] (Aj) edge[auto] node {if $v^j_k p^{ji}_k = 1$} (Ai);
		
		\coordinate (v) at (23,3);
		\node[rectangle, draw, fill=richards_1A_grun] (NC) at ($ (v) + (0,10) $) {Network Controller};
		
		\draw[->, dashed] (NC) -- ($ (v) + (0,3) $);
		
		\coordinate (c) at ($ (v) + (3,0) $);
		\coordinate (sp) at ($ (c) - (0,15) $);
		
		\draw[->,dashed] (sp) -- ($ (c) - (0,4) $);
		
		\node[anchor = north] (p) at ($ (sp) + (-16,-4) $)
		{
			$ \begin{aligned}
				&p_k^{ji} \sim \operatorname{Bernoulli} \left( \e{p}_k^{ji} \right)
				\hphantom{ \ , \quad \e{p}_k^{ji} = s_k^{ji}}
				\\
				&\e{p}_k^{ji} = s_k^{ji}
				\\
				&\{ s_k^{ji} \} \sim \operatorname{DTMC} \left( \mathcal{Q} , T, s_0 \right)
				\vphantom{\left( \e{p}_k^{ji} \right)}
				\\
				&\mathcal{Q} =
				\{ \
				\e{q}_{(1)}^{ji} \ , \
				\e{q}_{(2)}^{ji} \ , \ \e{q}_{(3)}^{ji} \ \}
				\vphantom{\left( \e{p}_k^{ji} \right)}
			\end{aligned} $
		};
		
		\node[state, circle, fill=richards_1A_grun, above right = -20mm and 5mm of p] (s1) {$\e{q}^{ji}_{(1)}$};
		\node[state, circle, fill=richards_1A_grun, above right = 3mm and 5mm of s1] (s2)  {$\e{q}^{ji}_{(2)}$};
		\node[state, circle, fill=richards_1A_grun, above left = 3mm and 5mm of s1] (s3)  {$\e{q}^{ji}_{(3)}$};
		
		\draw[every loop]
		(s1) edge[loop below] node {} (s1)
		(s1) edge[bend left] node {} (s2)
		(s1) edge[bend left] node {} (s3)
		
		(s2) edge[bend left] node {} (s1)
		(s2) edge[loop right] node {} (s2)
		(s2) edge[bend left] node {} (s3)
		
		(s3) edge[bend left] node {} (s1)
		(s3) edge[bend left] node {} (s2)
		(s3) edge[loop left] node {} (s3);
		
		\coordinate (t1) at ($ (sp) + (45,0) $);
		\coordinate (t2) at ($ (sp) + (-42,0) $);
		\coordinate (t3) at ($ (sp) + (-42,-33) $);
		\coordinate (t4) at ($ (sp) + (45,-33) $);
		\draw[dashed] (t1) -- (t2) -- (t3) -- (t4) -- (t1);

		\node[rectangle, draw, fill=richards_1A_grun] at (sp) {Stochastic Process};
		
	\end{tikzpicture}
	\caption{Conditions for successful transmission from agent~$j$ to agent~$i$: $v_k^j p_k^{ji}$ has to be $1$; the DTMC stands exemplarily. }
	\label{fig::link}
\end{figure}
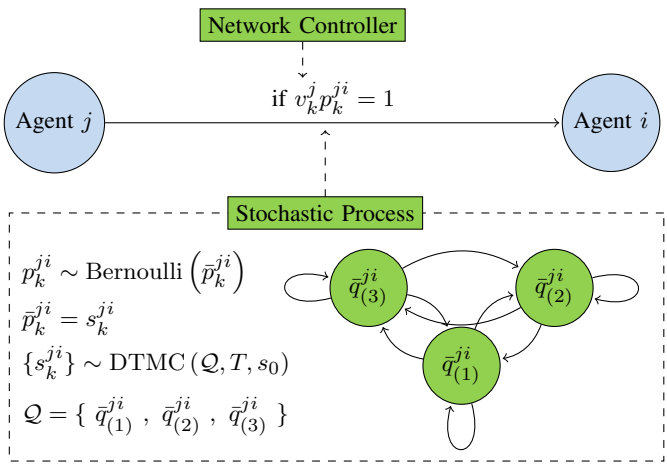

If, in the current time-step $k$, agent $i$ successfully receives data from agent $j$, $\aoi^{ij}_k$ is reset to 1 (since in time-discrete models, transmission and computation is assumed to take up an entire time-step). Otherwise $\aoi^{ij}_k$ is increased by 1.
However, successfully receiving data from agent $j$ can only occur, if (i) the network controller does schedule agent $j$ to broadcast its data, via setting the control variable $v_k^j \in \{0,1\}$ to $1$, and if (ii) the data does not get lost due to erroneous transmissions (see \figref{fig::link}). The second part is expressed by the stochastic variable $p^{ji}_k \in \{0,1\}$ ($0$ corresponding to failure) such that data is successfully received if $v_k^j \cdot p^{ji}_k = 1$. This results in the following evolution for the AoI:
\begin{equation}
	\label{eq::com::aoi_evolution}
	\aoi^{ij}_{k+1} 
	= 1 + \aoi^{ij}_k \left( 1 - v^j_k p^{ji}_k \right)
	= 1 + \aoi^{ij}_k \left( 1 - p^{ji}_k \right)^{v^j_k }
	.
\end{equation}

The control variables $v_k^j$ are collected in the binary control vector $v_k = [v_k^1,\dots v_k^n]^\intercal \in \{0,1\}^n$. Usually, e.g. due to interference properties, only certain agents are allowed to be engaged in broadcasting at the same time, and hence only certain realizations of $v_k$ are admissible. The admissible set of control vectors will be denoted by $\mathcal{V}$ such that $v_k \in \mathcal{V}$.

For ease of notation, the upper indices of the random variable $p_k^{ji}$ are omitted for the moment: the process $\{p_k\}$ is governed by a Bernoulli process and a discrete-time Markov chain (DTMC) as shown in the bottom of \figref{fig::link}. The first one allows for consideration of unpredictable short-term drops in channel quality. In particular, each $p_k$ is Bernoulli distributed (i.e. either $0$ or $1$) with success-parameter $\e{p}_k \in [0,1]$ (note that the parameter is indeed time-variant). Opposed to that, the DTMC allows for consideration of a partially predictable, long-term behavior by 
dictating the time behavior of the parameter $\e{p}_k$.
To that end, let 
$\mathcal{Q} = \left\{ \e{q}_{(1)}, \dots \e{q}_{(m)} \right\}$
be a finite set of values that $\e{p}_k$ can take (the indices of the elements are set in parentheses to distinguish them from the time-step). Moreover, define
$\{s_k\} \sim \operatorname{DTMC}\left(   \mathcal{Q} , T, s_0 \right)$ with transition matrix $T$ and initial state $s_0$ such that in each time-step $k$ it holds that $\e{p}_k = s_k$. For completeness, let the row vector $\sigma_k$ denote the distribution to $s_k$. The following common assumption is made:
\begin{assume}
	The quantities $\mathcal{Q}$ and $T$ of the DTMC $\{s_k\}$ are known (e.g. as a result of machine learning techniques), and the state $s_k$ is observable in time-step $k$.
\end{assume}

Note that under this model, it is possible to calculate the following term (needed later in \eqref{eq::com::future_aoi}) for some arbitrary time-steps $k_0 < k_1 < \dots k_l$ in increasing order:
	\begin{align}
	    \CE{ \prod_{i=1}^l p_{k_i} }{ \sigma_{k_0}}
	=
	\CE{ \prod_{i=1}^l \e{p}_{k_i} }{ \sigma_{k_0}}
	\\
	=
	\sigma_{k_0} \prod_{i = 1}^l \left( T^{k_i-k_{i-1}} \Delta_\mathcal{Q} \right) \mathbf{1}
	=
	\sigma_{k_0} T^{k_l} \Delta_{\mathcal{Q}}^{l} \mathbf{1}
	,
	\end{align}
where $\Delta_\mathcal{Q}$ is the diagonal matrix whose entries are the elements of $\mathcal{Q}$ in the given order.

An obvious control objective consists of minimizing the AoI for the \textit{immediate next} time-step (a common approach in current network control strategies). Opposed to that, our strategy aims to minimize the AoI over a certain number of time-steps, the prediction horizon $\HR$. Without loss of generality, assume that the current time-step is $0$. Then, using the recursive form \eqref{eq::com::aoi_evolution}, the explicit expression for the AoI in time-step $k$ becomes
\begin{align}
    	\label{eq::com::future_aoi}
	\aoi_{k}^{ij} &=
	1 + \left( 1-v_{k-1}^{j} p_{k-1}^{ji} \right)
	\\
	& \hspace{6.5mm} 
	+ \left( 1-v_{k-1}^{j} p_{k-1}^{ji} \right)
	\cdot
	\left( 1-v_{k-2}^{j} p_{k-2}^{ji} \right)
	+ \dots
	\\
	& \hspace{6.5mm} 
	+
	\left( 1-v_{k-1}^{j} p_{k-1}^{ji} \right)
	\cdot \ \dots \ \cdot
	\left( 1-v_{0}^{j} p_{0}^{ji} \right) a_0^{ij}
	\\
	&=
	k +
	\sum_{\mathcal{I} \in P( \mathbb{N}_{k-1} )}
	\left( -1 \right)^{|\mathcal{I}|} \left( \aoi_0 + \min \{ \mathcal{I} \} \right)
	\prod_{l \in \mathcal{I} } v_{l}^j p_{l}^{ji}
	.
\end{align}
Here, $P( \mathbb{N}_{k-1} )$ is the power set of the set $\mathbb{N}_{k-1}$ (the natural numbers from $0$ up to $k-1$) which stems from the products of $(1 - v_l p_l)$. It is $\min \{ \emptyset \} = \max \{ \emptyset \} = 0$ and $\prod_{\emptyset} = 1$. 

Summing up \eqref{eq::com::future_aoi} for $k = 1,\dots N$ and taking the expectation yields the suitable objective $J^{ij}$ (the accumulated expected AoI values over the next $\HR$ time-steps):
\begin{align}
	\label{eq::com::part_original_objective}
	J^{ij}(&v_{0|0},\dots v_{\HR-1|0}) := \CE{ \sum_{k=1}^\HR \aoi_{k}^{ij} }{a_0^{ij} , \sigma_0} 
	\\
	=&
	\frac{\HR^2+\HR}{2} +
	\sum_{\mathcal{I} \in P( \mathbb{N}_{\HR-1} )}
	\left( \HR - \max \{ \mathcal{I} \} \right)
	\\
	\cdot 
	&\left( -1 \right)^{|\mathcal{I}|}
	\left( \aoi_0^{ij} + \min \{ \mathcal{I} \} \right)
	\left( \prod_{l \in \mathcal{I} } v_{l|0}^{j} \right) \CE{ \prod_{l \in \mathcal{I} } p_{l}^{ji} }{\sigma_0}
	.
\end{align}
Note that there is no need to consider the sum of squares of $a_{k}^{ij}$, because every AoI is potentially reset to $1$, no matter its value. Hence, larger values of the AoI will automatically be more prone to minimization than smaller ones, even in this linear formulation.

Note that \eqref{eq::com::part_original_objective} merely considers the AoI for data of agent $i$ from agent $j$. The actual control objective however has to consider all AoI values in the entire system, and hence becomes:
\begin{gather}
	\label{eq::com::single_objective}
	\min_{ v_{0|0}, \dots v_{\HR-1|0} \in \mathcal{V} } \
	\sum_{i = 1}^n \sum_{j = 1}^n
	w^{ij} J^{ij}( v_{0|0}, \dots v_{\HR-1|0} )
	,
\end{gather}
where $w^{ij} \in \mathbb{R}_+$ are weights to balance for more or less important data. E.g., if agent $i$ has no use for data from agent $j$, then $w^{ij} = 0$. Eventually, these weights could also be signaled from the agents to the network controller, allowing for time-variant weights.

Due to the control vector $v_k$ being binary, this is a combinatorial problem with non-linear objective function. The amount of feasible solutions is given by $|\mathcal{V}|^\HR$ where $|\mathcal{V}|$ denotes the cardinality of $\mathcal{V}$. Furthermore, the amount of summands (over the power set in \eqref{eq::com::part_original_objective}) that need be evaluated in order to obtain the value of $J^{ij}$ for a single realization of $v_{0|0},\dots v_{\HR-1|0}$ grows exponentially in $\HR$ as well. Hence, the problem is most certainly not suited for practical implementation.

To alleviate the computational burden, two heuristic relaxations to the optimization problem \eqref{eq::com::single_objective} are introduced that drastically reduce the effort of finding a good solution (evaluated in extensive simulations). First, the branching of the paths of the DTMC is replaced by a mean distribution. In particular, this means that $\CE{{p}_{k_1}^{ji} \cdot {p}_{k_2}^{ji}}{\sigma_0}$ is approximated by $\CE{{p}_{k_1}^{ji}}{\sigma_0} \cdot \CE{{p}_{k_2}^{ji}}{\sigma_0}$.
Using the substitution
\begin{equation}
	\phi_k^{ji}(v_k) := \CE{ 1-v_k^j p_k^{ji} }{\sigma_0}
	\in
	\{ 1 , 1 - \sigma_0 T^k \Delta_\mathcal{Q} \mathbf{1} \}
\end{equation}
this makes it possible to state \eqref{eq::com::part_original_objective} in a simplified form:
\begin{multline}
	I^{ij}(v_{0|0},\dots v_{\HR-1|0})
	:=
	\CE{ \sum_{k=1}^\HR \aoi_{k}^{ij} }{a_0^{ij} , \sigma_0}_\text{relaxed}
	\\
	=
	\HR + \sum_{k=1}^\HR \left(
	\sum_{l=1}^{k-1} \prod_{m=1}^l \phi_{m}^{ji}(v_{m|0}) +
	\aoi_0^{ij} \prod_{m=0}^{k-1} \phi_{m}^{ji}(v_{m|0}) 
	\right)
	.
\end{multline}
Compared to the strict formulation, only $\frac{\HR^2}{2}$ terms need to be evaluated in order to obtain $I^{ij}$ for a single realization of $v_{0|0},\dots v_{\HR-1|0}$.

As a second relaxation, problem \eqref{eq::com::single_objective} is separated into $\HR$ consecutive minimization problems:
\begin{gather}
	\label{eq::com::original_objective}
	\min_{
		\substack{
			v_{k_\HR|0} \in \mathcal{V} \\
			k_\HR \in \mathbb{N}_{\HR-1}\setminus \{k_{1}, \dots k_{\HR-1}\} }
	}
	\dots
	\min_{
		\substack{
			v_{k_2|0} \in \mathcal{V} \\
			k_2 \in \mathbb{N}_{\HR-1} \setminus \{k_1\} }
	}
	\
	\min_{
		\substack{
			v_{k_1|0} \in \mathcal{V} \\
			k_1 \in \mathbb{N}_{\HR-1} }
	}
	\\
	\sum_{i = 1}^n \sum_{j = 1}^n
	w^{ij} I^{ij}( v_{0|0}, \dots v_{\HR-1|0} )
	.
\end{gather}
In each minimization, the most promising realization of $v_k$ in the most promising time-step $k$ is fixed and applied to the objective. However, in any subsequent minimization, this time-step then becomes unavailable and thus the relative feasible set for each subsequent minimization shrinks. This scheme results in $(\HR + (\HR-1) + \ldots +1) \cdot|\mathcal{V}| = \frac{\HR^2+\HR}{2} \cdot |\mathcal{V}|$ evaluations of the objective before finding the solution. Both relaxations therefore drastically reduce the complexity of the problem and allow for a fast calculation of a suboptimal solution. Note that choosing $\HR=1$ still leads to the \textit{exact} minimization over all \textit{immediate} AoI values in the entire system.

In summary, the network controller operates as an MPC, solving \eqref{eq::com::original_objective} in every time-step $k$, with \eqref{eq::com::original_objective} being denoted relative to time-step $k$. Let us revoke this relative notation and assume the current time-step to be $k$ again.
Considering more than just the AoI of the immediate next time-step in the control objective, naturally improves the control performance. However, a solution of \eqref{eq::com::original_objective} now also enables the network controller to predict when new data is probably arriving at the agents. Hereafter, these predictions are called "forecasts" and explain their derivation in the following. Forecasts are defined as trajectories of predicted AoI and are denoted by:
\begin{align}
	\label{eq:forecasts}
	\tr{\aoi}_k^{ij} = [\aoi_{k+1|k}^{ij},\aoi_{k+2|k}^{ij},\dots a_{k+\HR|k}^{ij}]^\T
	. 
\end{align}

Given a current network control trajectory $\tr{v}_k = [v_{k|k}^\T ,\dots v_{k+\HR-1|k}^\T]^\T$, forecasts are generated by the network controller based on the accumulated transmission failure probability by time-step $h$, which is:
\begin{equation}
	\prod_{l=0}^{h-1} \CE{
		\left( 1 - p_{k+l}^{ji} \right)^{v_{k+l}^j}
	}{\sigma_k}
	.
\end{equation}
As soon as this probability falls beneath a certain threshold $\tau$, the network controller assumes that transmission did succeed at least once by the end of that time-step. Let $k_\tau^{ij}$ be the time-step in which this happens and let $k_f^{ij}$ be the first time-step in which transmission was attempted:
\begin{gather}
	k_\tau^{ij} := \min_{h}
	\left\{
	h:
	\prod_{l=0}^{h-1} \CE{
		\left( 1 - p_{k+l}^{ji} \right)^{v_{k+l}^j}
	}{\sigma_k} < \tau
	\right\},
	\\
	k_f^{ij} := \min_{h}
	\left\{
	h:
	\prod_{l=0}^{h-1} \CE{
		\left( 1 - p_{k+l}^{ji} \right)^{v_{k+l}^j}
	}{\sigma_k} < 1
	\right\}
	.
\end{gather}
Then, as depicted in \figref{fig::com::forecast}, the forecast $\tr{a}_k^{ij}$ becomes:
\begin{salign}
	\label{eq::com::aoi}
	\tr{\aoi}_k^{ij} =
	\begin{bmatrix}
		\aoi_{k+1|k}^{ij} \\
		\vdots \\
		\aoi_{k+k_\tau^{ij}-1|k}^{ij} \\
		\aoi_{k+k_\tau^{ij} \pm 0|k}^{ij} \\
		\aoi_{k+k_\tau^{ij}+1|k}^{ij} \\
		\vdots
	\end{bmatrix}
	=
	\begin{bmatrix}
		\aoi_k^{ij} + 1 \\
		\vdots \\
		\aoi_k^{ij} + k_\tau^{ij} - 1
		\vphantom{\aoi_{k_\tau^{ij}-1|k}^{ij}}\\
		\aoi_k^{ij} + k_\tau^{ij} 
		\vphantom{\aoi_{k_\tau^{ij}-1|k}^{ij}}\\
		k_\tau^{ij} - k_f + 1 
		\vphantom{\aoi_{k_\tau^{ij}-1|k}^{ij}}\\
		\vdots
	\end{bmatrix}
	.
\end{salign}
The concept can be extended in a straightforward way for cases in which the threshold is transgressed more than once.

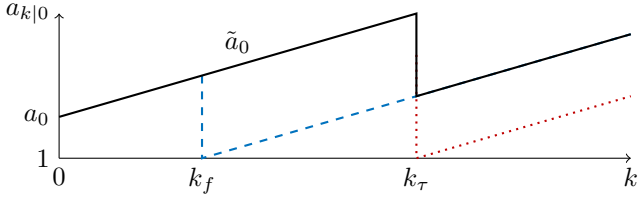
\begin{figure}[tb]
	\centering
	\begin{tikzpicture}[xscale=0.95, yscale = 0.55]
		\draw[->] (0,0) -- (8,0) node[anchor=north] {$k$};
		\draw[->] (0,0) -- (0,3.5) node[anchor=east] {$a_{k|0}$};
		\draw	(0,0) node[anchor=north] {$0$}
		(0,0) node[anchor=east] {$1$}
		(0,1) node[anchor=east] {$a_0$}
		(2,0) node[anchor=north] {$k_f$}
		(5,0) node[anchor=north] {$k_\tau$};
		
		\draw[dashed, thick, janniks_1A_blau] (2,2) -- (2,0) -- (8,3);
		\draw[dotted, thick, richards_1A_rot] (5, 2.5) -- (5,0) -- (8,1.5);
		\draw[thick] (0,1) -- (5,1+4*5/8) -- (5,1.5) -- (8,3);
		\draw (2.5,2.8) node {$\tr{a}_0$};
		
	\end{tikzpicture}
	\caption{Black line: generated forecast, given $k_f$ and $k_\tau$. Blue-dashed (red-dotted) line: AoI, if transmission would succeed in $k_f$ ($k_\tau$). Upper indices are omitted, current time-step is assumed to be $0$.
	}
	\label{fig::com::forecast}
\end{figure}

Due to the underlying stochastics, forecasts are of limited reliability. Decreasing the threshold $\tau$ naturally improves the reliability, however also reduces the amount of forecasts that can be detected within the fixed prediction horizon $\HR$. Depending on how the agents utilize the forecasts, a balance must be reached in which there are still enough reliable forecasts. This balance, together with an in-depth analysis of the network control performance under the presented control strategy will not be discussed in the scope of this paper. Instead, the focus lies on how the subsystem controllers make use of the forecasts to improve the overall control performance, as described in the next section.

\section{Distributed Control System}
\label{sec:controlsystem}

With respect to the distributed plant to be controlled, consider again a set $\mathcal{N}=\{1,\dots,\nc\}$ of discrete-time, linear time-invariant subsystems with the following dynamics:
\begin{salign}
	x_{k+1}\s{i}=A\s{i}x_k\s{i}+B\s{i}u_k\s{i},\quad i\in\mathcal{N}. \label{eq::state_equation}
\end{salign}
For the subsystem with index $i$, $x_k\s{i}$ denotes the state vector and $u_k\s{i}$ the input vector, both subject to polytopic constraints:
\begin{salign}
	&x_k\s{i}\in\mathcal{X}\s{i}=\setdef{x\in\mathbb{R}^{n_x\s{i}}}{C_x\s{i} \cdot x \leq b_x\s{i}},\\
	&u_k\s{i}\in\mathcal{U}\s{i}=\setdef{u\in\mathbb{R}^{n_u\s{i}}}{C_u\s{i} \cdot u \leq b_u\s{i}} \label{eq:input_constraint}.
\end{salign}
It is assumed that the subsystems are coupled through constraints and the control objective, but not through the dynamics \eqref{eq::state_equation} at this stage. Let the index-set $\Ni\subset\mathcal{N}$ contain the indices of all subsystems coupled to $i$ in this form, called \emph{neighbors}.

Since the exchange of information is limited to the communication network, each subsystem estimates the behavior of its neighbors based on the initially provided model of the neighbors' dynamics.
During system operation, each subsystem controller receives or holds possibly outdated information on the neighbors' states, and estimates their expected current state by applying forward time-shift to the neighbors' models. 
Thus, a local, augmented (and now dynamically coupled) model can be constructed for any subsystem:
\begin{align}
	\mathbf{x}_{k+l+1|k}\s{i} =&	\mathbf{A}\s{i} \mathbf{x}_{k+l|k}\s{i} + \mathbf{B}\s{i} u_{k+l|k}\s{i} \zu
	&+   \mathbf{B}_1\s{i} \mathbf{u}_{k+l|k}\s{i} + \mathbf{B}_2\s{i} \delta\mathbf{u}_{k+l|k}\s{i}.
	\label{eq::augmented_model}
\end{align}
In here, the augmented state vector is \mbox{$\mathbf{x}_k\s{i}=\begin{sbm}x_k\s{i}; \begin{sbm}x_k\s{ij}\end{sbm}_{j\in\Ni}\end{sbm}\in\mathbb{R}^{\mathbf{n}_x\s{i}}$}, the  augmented vector of nominal inputs of the neighbors is \mbox{$\mathbf{u}_k\s{i}=\begin{sbm}
		u_k\s{ij}\end{sbm}_{j\in\Ni}$}, and possible deviations of the neighbors' inputs from their nominal values are denoted by \mbox{$\delta\mathbf{u}_k\s{i}=\begin{sbm}\delta u_k\s{ij}\end{sbm}_{j\in\Ni}$}. Such a possible deviation of an input in time-step $k+l$ as predicted in time-step $k$  is defined by:
\begin{salign}
	u_{k+l}\s{j}=u_{k+l|k}\s{ij}+\delta u_{k+l|k}\s{ij}. \label{eq::def_deviation}
\end{salign}
The matrices of the model \eqref{eq::augmented_model} follow as 
described in \cite{hahn2018distributed}, e.g. if $\mathcal{N}\s{i}=j$:
\begin{align}
    \mathbf{A}\s{i}=\begin{sbm} A\s{i} &0\\ 0 & A\s{j}
    \end{sbm},
    \mathbf{B}\s{i}=\begin{sbm} B\s{i} \\ 0 
    \end{sbm},
    \mathbf{B}_1\s{i}=\mathbf{B}_2\s{i}=\begin{sbm}  0 \\B\s{j}
    \end{sbm}.
\end{align}


All subsystems $j\in\Ni$ communicate conservative uncertainty sets
$\Delta\mathcal{U}$ for their possible deviations to subsystem $i$, i.e. their selected inputs are guaranteed to be contained in these sets. Collecting all the communicated data, polytopic constraints for subsystem $i$ are written as:
\begin{equation}
	\label{eq:abcdef}
	\delta\mathbf{u}_{k+l|k}\s{i}\in\delta\mathbb{U}_{k+l|k}\s{i}=\mbox{prod}\left(\Delta\mathcal{U}_{k+l|k}\s{ij},j\in\Ni\right)
\end{equation}
where the sets $\Delta \mathcal{U}_{k+l|k}\s{ij}$ correspond to the communicated sets but shifted forward in time to the current time-step. Note, that also subsystem $i$ has to communicate such an uncertainty set to all its neighbors. The determination of this uncertainty set is described later in \eqref{eq:uncertaintyset}.

As already mentioned, the augmented model \eqref{eq::augmented_model} is dynamically coupled to neighbored subsystems such that coupling by state constraints can be formulated by:
\begin{salign}
	\mathbf{x}_{k+l|k}\s{i}\in\mathbb{X}\s{i}=\setdef{\mathbf{x}}{\mathbf{C}_x\s{i} \cdot\mathbf{x} \leq\mathbf{b}_x\s{i}}.
\end{salign}
Extending the model \eqref{eq::augmented_model} to the prediction model (c.f. \cite{hahn2019robust})
the behavior and constraints of the augmented system is predicted
up to the control horizon $\Hc$:
\begin{salign}
	\tbfi{x}_k=\tbfi{A}\mathbf{x}_{k|k}\s{i}+\tbfi{B}\tilde{u}_k\s{i}+&\tbfi{B}_1 \tbfi{u}_k+ \tbfi{B}_2 \delta\tbfi{u}_k \label{eq::pred_mod}
\end{salign}
with $\tilde{u}_k\s{i}$ as the input trajectory of subsystem $i$. The state trajectory $\tbfi{x}_k$ needs to satisfy
the state constraints:
\begin{align}
	&\tbfi{x}_k\in\tilde{\mathbb{X}}\s{i}=\mathbb{X}\s{i} \times \ldots \times \mathbb{X}_\xi\s{i}=\setdef{\tbf{x}}{\tbfi{C}_x\cdot \tbf{x} \leq \tbfi{b}_x} \label{eq::const_statetrajectory}
\end{align}
for all possible deviations of neighbored subsystems:
\begin{align}
	&\tbfi[\delta]{u}_k\in\delta\tilde{\mathbb{U}}_k\s{i}=\setdef{\tbf[\delta]{u}}{\tbfi{C}_\delta\cdot \tbf[\delta]{u} \leq \tbfi{b}_{\delta_k}},
\end{align}
where the set $\delta\tilde{\mathbb{U}}_k\s{i}$ is derived from \eqref{eq:abcdef}.

To compensate for these possible deviations from preceding neighbors, the control law of each subsystem is formulated as a disturbance feedback in stacked vector form:
\begin{salign}
	\tilde{u}_k\s{i}=\tilde{v}_k\s{i}+\tilde{\Delta}_k\s{i}=\tilde{v}_k\s{i}+K_{k}\s{i} \tbf[\delta]{u}_k\s{i} \label{eq::controllaw}
\end{salign}
with $\tilde{v}_k\s{i}$ as nominal input trajectory and $\tilde{\Delta}_k\s{i}$ as disturbance feedback term. Note that the $\tilde{\Delta}_k\s{i}$ vanishes if all neighbored subsystems do not deviate from communicated trajectories, and that the local input trajectory needs to satisfy a local time-varying constraint:
\begin{align}
	\tilde{u}_k\s{i}\in\tilde{\mathcal{U}}_k\s{i}=\setdef{\tilde{u}}{\tilde{C}_u\s{i}\cdot \tilde{u} \leq \tilde{b}_{u_k}\s{i}}\subseteq \mathcal{U}\s{i} \times \ldots \times \mathcal{U}\s{i}. \label{eq::const_inputtrajectory}
\end{align}
Even though \eqref{eq:input_constraint} is time-invariant, the time-variance of \eqref{eq::const_inputtrajectory} results from a self-updating mechanism to be formulated later in \eqref{eq::consistency}. Thus, the control trajectory to be computed in $k$ is constrained by the information about possible deviations from the nominal trajectory computed and communicated to connected subsystems in a previous time-step.

\begin{assume}\label{ass:relibaleForecast}
	In each time-step $k$, the network-controller provides reliable forecasts of the future age of information $\tilde{a}_k\s{ij}$ with \eqref{eq:forecasts} to each $j\in\Ni$ such that:
	\begin{enumerate}
		\item Predictions of the AoI for a time-step $k+l$ are upper bounds for the real AoI at time-step $k+l$, such that \mbox{$\aoi_{k+l}\s{ij}\leq \aoi_{k+l|k+r}\s{ij}\leq \aoi_{k+l|k}\s{ij}\; \forall k< r< l$} holds.
		\item If $\HR < \Hc$, then \mbox{$\aoi_{k+l|k}\s{ij}:=\bar{\aoi}\s{ij}\; \forall l \in\{\HR+1,\ \Hc \}$}, where $\bar{\aoi}\s{ij}=\max\limits_{k}\left(a_k\s{ij}\right)$, and $N$ is the horizon for the trajectory $\tr{\aoi}_k^{ij}$ as provided by the network controller.
		\item The disturbance feedback matrix $K_{k}\s{i}$ satisfies the structure:
		\begin{salign}
			K_{k}\s{i}=\begin{sbm}
				\mathbf{0} & \mathbf{0} & \cdots & \mathbf{0}\\
				\eta_{k+1,k-\bar\aoi_k\s{i}+1|k}\s{i} & \cdots & \cdots & \mathbf{0}\\
				\eta_{k+2,k-\bar\aoi_k\s{i}+1|k}\s{i} & \ddots 	& \cdots &\mathbf{0}\\
				\vdots & \ddots & \cdots & \mathbf{0}\\
				\eta_{k+\Hc-1,k-\bar\aoi_k\s{i}+1|k}\s{i} & \cdots & \eta_{k+\Hc-1,k+\Hc-2|k}\s{i} & \mathbf{0}
			\end{sbm}
		\end{salign}
		with
		\begin{salign}
			&\bar\aoi_k\s{i}=\max\limits_{j\in\Ni}\left(\aoi_k\s{ij}\right),\\
			&
			\eta_{n,m|k}\s{i}=\begin{sbm}\eta_{n,m|k}\s[\T]{ij}\end{sbm}^\T_{j\in\Ni}\in\mathbb{R}^{n_u\s{i} \times \mathbf{n}_u\s{i}}, \nonumber \\
			& \eta_{n,m|k}\s{ij}\in\begin{cases2}
				\mathbb{R} & \mbox{if } k-\aoi_k\s{ij}<m\leq n-\aoi_{n|k}\s{ij}\\
				\mathbf{0} & \mbox{else}.
			\end{cases2}.
		\end{salign}
	\end{enumerate}
\end{assume}

By substituting \eqref{eq::controllaw} in \eqref{eq::pred_mod}, the constraints \eqref{eq::const_statetrajectory} and \eqref{eq::const_inputtrajectory} are 
combined into the set of all admissible input trajectories  (c.f. \cite{hahn2018distributed,hahn2019robust}):
\begin{salign}
	\left(\tilde{v}_k\s{i},K_{k}\s{i}\right)\in \Pi_k\s{i}\left(\mathbf{x}_k\s{i}\right) \label{eq::admissible set}.
\end{salign}

From \cite{gross2014distributed}, the specification of local control goals with respect to the augmented state and input vectors is adopted, establishing a form of coupling through control objectives.



%
The local control objective with finite horizon $\Hc$ and stage cost $L\s{i}(x,u)=\norm{x}^2_\Qxi+ \norm{u}^2_\Qui$ is given by:
\begin{salign}
	J_k\s{i} = V_f\s{i}\left(\xi_{k+\Hc|k}\s{i}\right) + \sum_{l=0}^{\Hc-1}L\s{i}\left(\mathbf{x}_{k+l|k}\s{i},\begin{sbm}v_{k+l|k}\s[\T]{i}\; \mathbf{u}_{k+l|k}\s[\T]{i}\end{sbm}^\T\right)\quad \label{eq::cost_function}
\end{salign}
with $\Qxi=\QxiT>0,\Qui=\QuiT>0$, end cost function $V_f\s{i}(r)=\norm{r}^2_{P_\xi\s{i}}$ with terminal cost $P_\xi\s{i}=P_\xi\s[\T]{i}\geq 0$, and terminal state $\xi_k\s{i}$:
\begin{salign}
	\xi_{k+\Hc|k}\s{i}=\begin{sbm}\mathbf{x}_{k+\Hc-\bar\aoi\s{i}+1|k}\s[\T]{i},\dots,\mathbf{x}_{k+\Hc-1|k}\s[\T]{i},\mathbf{x}_{k+\Hc|k}\s[\T]{i}\end{sbm}^\T \label{eq::delayedsystem}
\end{salign}
In here, $\bar{\aoi}\s{i}=\max\limits_{k,j\in\Ni}\left(\aoi_k\s{ij}\right)$ is the maximally possible age of information.

\begin{assume}\label{ass::terminalset}
	Given a terminal set \mbox{$\mathbb{X}_\xi\s{i}\subseteq \mathbb{X}\s{i} \times \ldots \times \mathbb{X}\s{i}$} for the terminal state $\xi_k\s{i}$, it holds that:
	\begin{enumerate}
		\item The set $\mathbb{X}_\xi\s{i}$ is robust forward invariant with respect to the closed-loop system:
		\begin{salign}
			\xi_{k+1}\s{i}=A_\xi\s{i}\xi_k\s{i} +  B_\xi\s{i}  \mathbf{u}_k\s{i} \label{eq::terminalsystem}
		\end{salign}
		with
		\begin{salign}
			A_\xi\s{i}=\left(\begin{sbm}\mathbf{0} & I\\ \mathbf{0} & \mathbf{A}\s{i}\end{sbm}+\begin{sbm}\mathbf{0} & \mathbf{0}\\ \mathbf{B}\s{i}K\s{i} & \mathbf{0}\end{sbm}\right),\ B_\xi\s{i}=\begin{sbm}\mathbf{0}\\\mathbf{B}_1\s{i}\end{sbm}
		\end{salign}
		such that $\xi_{k+1}\s{i}\in\mathbb{X}_\xi\s{i}$ if $\xi_k\s{i}\in\mathbb{X}_\xi\s{i}$ for all $\mathbf{u}_k\s{i} \in prod\left(\mathcal{U}\s{j},j\in\Ni\right)$.
		\item The terminal state feedback $u_k\s{i}=K\s{i}\mathbf{x}_{k-\bar{\aoi}\s{i}+1}\s{i}=\begin{sbm} K\s{i} &\mathbf{0} \end{sbm}\xi_k\s{i} =K_\xi\s{i} \xi_k\s{i}$ satisfies the local input constraint $u_k\s{i}\in\mathcal{U}\s{i}$ for all $\xi_k\s{i}\in\mathbb{X}_\xi\s{i}$.
		\item The terminal cost ${P_\xi\s{i}}$ solves the Lyapunov equation 
		${A_\xi\s{i}}^\T P_\xi\s{i}A_\xi\s{i}-P_\xi\s{i}+Q_\xi\s{i}\preccurlyeq0$
		with \mbox{$Q_\xi\s{i}=diag\left(\begin{sbm}
				\mathbf{0}\\K\s{i}\end{sbm}^\T\Qui\begin{sbm}
				\mathbf{0}\\K\s{i}\end{sbm},\mathbf{0},\dots,\mathbf{0},\Qxi\right)$} and with respect to the autonomous and undisturbed closed-loop system: 
		\begin{align}
			\xi_{k+1}\s{i}=A_\xi \xi_k\s{i}. \label{eq:Lyap_CL}
		\end{align}
		\item $L(\mathbf{x}_{k+l|k}\s{i},\mathbf{0})=0$ for all $l\in\{\Hc-\aoi\s{i}+1,\dots,\Hc\}$ implies $\norm{\xi_{k+\Hc|k}\s{i}}^2_{P_\xi\s{i}}=0$.
	\end{enumerate}
\end{assume}

Minimizing the cost function \eqref{eq::cost_function} with respect to the set of admissible inputs \eqref{eq::admissible set}, the optimization problem to be solved in every time-step is the following:
\begin{salign}
	V\s{i}_k= &\min_{\left(\tilde{v}_k\s{i},K_{k}\s{i}\right)} 
	\norm{\xi_{k+\Hc|k}\s{i}}^2_{P_\xi\s{i}} \label{eq::QP} \zu
	& +\sum_{l=0}^{\Hc-1}\norm{\mathbf{x}_{k+l|k}\s{i}}^2_\Qxi+\norm{\begin{sbm}v_{k+l|k}\s[\T]{i}& \mathbf{u}_{k+l|k}\s[\T]{i}\end{sbm}^\T}^2_\Qui \nonumber\\
	\mbox{s.t.:} \quad & \mathbf{x}\s{i}_{k+l+1|k}=\mathbf{A}\s{i}\mathbf{x}_{k+l|k}\s{i}+\mathbf{B}\s{i} v_{k+l|k}\s{i} + \mathbf{B}_1\s{i}\mathbf{u}_{k+l|k}\s{i}, \nonumber\\
	& \text{and } \eqref{eq::admissible set}. \nonumber
\end{salign}

Solving \eqref{eq::QP} 
yields the nominal input trajectory $\tilde{v}_k\s{i}$ and the feedback matrix $K_{k}\s{i}$. Together with the bounds on $\tbfi[\delta]{u}_k$, the feeback matrix provides an upper bound for $\tilde{\Delta}_k\s{i}$, such that:
\begin{salign}
	\Delta\tilde{\mathcal{U}}_k\s{i}=\setdef{\tilde{\Delta}_k\s{i}}{\tilde{C}_u\s{i} \tilde{\Delta}_k\s{i}\leq \tilde{\gamma}_k\s{i}} \label{eq:uncertaintyset}
\end{salign}
is the uncertainty set of the nominal input trajectory of subsystem $i$ (for exact computation of $\tilde{\gamma}_k\s{i}$ c.f. \cite{hahn2018distributed,hahn2019robust}).

The current state, nominal input trajectory, and uncertainty set determine the behavior of subsystem $i$ for the next $\Hc$ time-steps, and are communicated to all neighboring
subsystems. Forcing subsystem $i$ 
in the next time-step $k+1$ to comply with the communicated information in time-step $k$, the time-varying input trajectory constraint \eqref{eq::const_inputtrajectory} needs to be shifted forward in time by one time-step:
\begin{salign}
	\tilde{b}_{u_{k+1}}\s{i}=\tilde{C}_u\s{i} \begin{sbm}v_{k+1|k}\s[\T]{i},\dots,v_{k+\Hc-1|k}\s[\T]{i}, \mathbf{0}\end{sbm}  \nonumber\zu
	+\begin{sbm}\gamma_{k+1|k}\s[\T]{i},\gamma_{k+\Hc-1|k}\s[\T]{i},\dots,b_u\s[\T]{i}\end{sbm} \label{eq::consistency}.
\end{salign}

\begin{theorem}
	If Asm. \ref{ass:relibaleForecast} and \ref{ass::terminalset}  hold, and \eqref{eq::QP} is feasible for all subsystems $i\in\mathcal{N}$ in time-step $k=0$, \eqref{eq::QP} remains feasible for all subsystems and $k>0$.
\end{theorem}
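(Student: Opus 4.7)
The plan is a standard recursive feasibility argument for robust MPC, adapted to the delayed-information structure imposed by Assumption~\ref{ass:relibaleForecast}. Suppose that at time $k$ the problem \eqref{eq::QP} admits a feasible pair $(\tilde{v}_k\s{i},K_{k}\s{i})$ for every $i\in\mathcal{N}$, with associated predicted state trajectory $\tbfi{x}_k$ and in particular a terminal state $\xi_{k+\Hc|k}\s{i}\in\mathbb{X}_\xi\s{i}$. I will exhibit a candidate pair $(\tilde{v}_{k+1}\s{i},K_{k+1}\s{i})$ at $k+1$ and check that it satisfies every constraint of \eqref{eq::admissible set}, thereby establishing feasibility at $k+1$; induction then gives the claim for all $k>0$.

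First I would construct the candidate nominal input trajectory by the usual MPC shift: take the last $\Hc-1$ entries of $\tilde{v}_k\s{i}$ and append a terminal input obtained from the terminal feedback $K_\xi\s{i}\xi_{k+\Hc|k}\s{i}$ of Assumption~\ref{ass::terminalset}.2. The candidate feedback matrix $K_{k+1}\s{i}$ is obtained by shifting $K_k\s{i}$ diagonally one step and appending a zero final row, which preserves the lower-block-triangular pattern required in Assumption~\ref{ass:relibaleForecast}.3. Because Assumption~\ref{ass:relibaleForecast}.1 guarantees $\aoi_{k+l|k+1}\s{ij}\le \aoi_{k+l|k}\s{ij}$ for $l>1$, every nonzero pattern position permitted under the new AoI forecast is also permitted under the old one, so the shifted $K_k\s{i}$ still respects the allowed sparsity structure at time $k+1$.

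Next, I would verify the state constraints. The shifted part of $\tbfi{x}_{k+1}$ inherits $\mathbf{x}_{k+l|k}\s{i}\in\mathbb{X}\s{i}$ from the feasibility at $k$ for all $l=1,\dots,\Hc$, so these entries automatically satisfy \eqref{eq::const_statetrajectory}. For the newly appended entry at the end of the horizon, the monotonicity of AoI predictions again ensures that the uncertainty set $\delta\tilde{\mathbb{U}}_{k+1}\s{i}$ is contained in the corresponding shift of $\delta\tilde{\mathbb{U}}_{k}\s{i}$; combined with the robust forward invariance of $\mathbb{X}_\xi\s{i}$ in Assumption~\ref{ass::terminalset}.1, applying $K_\xi\s{i}\xi_{k+\Hc|k+1}\s{i}$ keeps the augmented terminal state inside $\mathbb{X}_\xi\s{i}$ for every admissible $\mathbf{u}_{k+\Hc|k+1}\s{i}$. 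Assumption~\ref{ass::terminalset}.2 then guarantees $u_{k+\Hc|k+1}\s{i}\in\mathcal{U}\s{i}$, so the appended input is admissible in the unshifted constraint \eqref{eq:input_constraint}.

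The main obstacle, as anticipated, is checking the time-varying input-trajectory constraint \eqref{eq::const_inputtrajectory} at $k+1$, because the bound $\tilde{b}_{u_{k+1}}\s{i}$ is not the original $\tilde{b}_{u}\s{i}$ but the tightened vector produced by the self-updating rule \eqref{eq::consistency}. The key observation is that this rule is constructed precisely so that any shifted copy of the previous nominal trajectory, together with any disturbance feedback whose image stays inside $\Delta\tilde{\mathcal{U}}_k\s{i}$, automatically satisfies the updated constraint: the first $\Hc-1$ components of $\tilde{b}_{u_{k+1}}\s{i}$ are, by construction, equal to $\tilde{C}_u\s{i}$ applied to the shifted $\tilde{v}_k\s{i}$ plus the previously broadcast uncertainty margins $\gamma_{k+l|k}\s{i}$, and the last component falls back to $b_u\s{i}$, which is honored by the terminal input by Assumption~\ref{ass::terminalset}.2. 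Hence the constructed candidate lies in $(\tilde{v}_{k+1}\s{i},K_{k+1}\s{i})\in\Pi_{k+1}\s{i}(\mathbf{x}_{k+1}\s{i})$, proving feasibility at $k+1$ and closing the induction.
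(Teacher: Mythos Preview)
Your shift-and-append strategy is the same as the paper's, and you correctly identify the three ingredients: Assumption~\ref{ass:relibaleForecast}.1 for causality of the reused feedback, Assumption~\ref{ass::terminalset}.1--2 for the appended terminal step, and the self-update rule \eqref{eq::consistency} for the time-varying input constraint. The paper differs only in presentation: it argues directly with the shifted input trajectory \eqref{eq::u_traj_k+1} and checks that each entry $u_{k+l|k}\s{i}$ is \emph{computable} at time $k+l$ (the deviations $\delta u_{m|k}\s{ij}$ it uses satisfy $m\le k+l-a_{k+l|k}\s{ij}\le k+l-a_{k+l}\s{ij}$, hence are known), rather than explicitly exhibiting a candidate $(\tilde v_{k+1}\s{i},K_{k+1}\s{i})$.

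Working at the trajectory level sidesteps two bookkeeping slips in your version. First, your sparsity inclusion is stated backwards: what you need is that every position admissible under the \emph{old} forecast remains admissible under the \emph{new} one, and $a_{n|k+1}\s{ij}\le a_{n|k}\s{ij}$ delivers this for the upper index bound $m\le n-a_{n|k}\s{ij}$; the lower bound $m>k-a_k\s{ij}$ can in fact tighten if fresh data arrived at $k+1$, in which case the corresponding $\eta\,\delta u$ terms are now known constants that must be absorbed into the nominal part rather than kept in $K_{k+1}\s{i}$. Second, the appended terminal input $K_\xi\s{i}\xi_{k+\Hc|k}\s{i}$ depends affinely on the realized deviations through the shifted plan, so the final row of your candidate $K_{k+1}\s{i}$ is not zero but carries the induced gains. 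Neither point affects the validity of the argument, but both need a sentence of care if you insist on constructing the $(\tilde v,K)$ pair explicitly.
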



\begin{proof}
	If \eqref{eq::QP} is feasible in time-step $k$, it provides through \eqref{eq::admissible set} a nominal	input $v_{k|k}\s{i}$ which satisfies the time-varying input constraint $\mathcal{U}_k\s{i}$, and the global input constraint $\mathcal{U}\s{i}$ by definition of \eqref{eq::const_inputtrajectory}.
	Furthermore, $\tilde{u}_k\s{i}$ robustly steers the local system into the terminal constraint $\mathbb{X}_\xi\s{i}$ satisfying all local and global state constraints and time-varying input constraints for all possible deviations $\tbfi[\delta]{u}_k\in\delta\tilde{\mathbb{U}}_k\s{i}$.\\
	In all time-steps $k+l$ for $1\leq l <\Hc$, a feasible input $u_{k+l}\s{i}=u_{k+l|k}\s{i}$ is given by the tuple $\left(\tilde{v}_k\s{i},K_{k}\s{i}\right)$ designed in $k$ with:
	\begin{salign}
		u_{k+l}\s{i}:= v_{k+l|k}\s{i} + \sum_{j\in\mathcal{N}\s{i}}\;\sum_{m=k-\aoi_k\s{ij}+1}^{k+l-\aoi_{k+l|k}\s{ij}} \eta_{k+l,m|k}\s{ij} \delta u_{m|k}\s{ij}. \label{eq::controllaw2}
	\end{salign}
	Note, that the lower and upper bounds follow from Asm.  \ref{ass:relibaleForecast}.2, and that the last required deviation is $\delta u_{k+l-\aoi_{k+l|k}\s{ij}|k}\s{ij}$.
	Recall \eqref{eq::def_deviation} for time-step $k+l$, and the fact that the last exactly known input from subsystem $j$ in $k+l$ is given by $u_{k+l-\aoi_{k+l}\s{ij}}\s{j}$. Then, the last known deviation is $\delta u_{k+l-\aoi_{k+l}\s{ij}|k}\s{ij}$, which is at least as old as the desired one, if assumption \ref{ass:relibaleForecast}.1 holds with $\aoi_{k+l|k}\s{ij}\geq \aoi_{k+l}\s{ij}$.
	Through \eqref{eq::admissible set}, input $u_{k+l}\s{i}$ satisfies local input constraints, if all $\delta u_{m,k}\s{ij}\in\Delta\mathcal{U}_{m|k}\s{ij}$, which is true, if all subsystems $j$ update their local input trajectory constraint according to \eqref{eq::consistency}. Additionally, state constraints are also satisfied through \eqref{eq::admissible set}.
	
	In time-step $k+\Hc$, an admissible input can be calculated according to the terminal control law $u_{k+\Hc}\s{i}=K_\xi\s{i}\xi_{k+\Hc|k}\s{i}$, since the terminal state $\xi_{k+\Hc|k}\s{i}$ is robustly steered into $\mathbb{X}_\xi\s{i}$ through constraint \eqref{eq::admissible set} within \eqref{eq::const_statetrajectory}.
	If Asm. \ref{ass::terminalset} holds and $\mathbb{X}_\xi\s{i}$ is robustly invariant for any permissible $\mathbf{u}_{k+\Hc|k}\s{i}$, state and input constraints are satisfied.
	
	Thus, Asm. \ref{ass:relibaleForecast} and \ref{ass::terminalset} ensure the existence of the admissible input trajectory in $k+1$:
	\begin{salign}
		\tilde{u}_{k+1}\s{i}:=\begin{sbm}
			u_{k+1|k}\s[\T]{i}, \dots, u_{k+\Hc-1|k}\s[\T]{i}, (K_\xi\s{i}\xi_{k+\Hc|k}\s{i})^\T
		\end{sbm}^\T. \label{eq::u_traj_k+1}
	\end{salign}
	
	Feasibility of \eqref{eq::QP} for all times $k+l\; \forall l\in\mathbb{N}_{>0}$, follows directly by induction over k.
\end{proof}

According to the signal $\mathbf{u}_{k+l|k}\s{i}$ in the cost function \eqref{eq::cost_function}, stability of the distributed control system is proven with respect to the \textit{ISpS}-property, which is stated and proven below:
\begin{definition}\label{def::ISpS}
	From \cite{lazar2008input} it is well known, that an autonomous system \mbox{$z_{k+1}=f(z_k,w_k)$} with bounded disturbance \mbox{$w_k\in\mathcal{W}$} is \textit{input-to-state practical stable} (ISpS) in a forward positive invariant $\mathcal{Z}$, if there exist constants $d_1,d_2\geq0$, $a_1,a_2,a_3,a_e >0$, and $\mathcal{K}$-functions $\alpha_1(r)=a_1r^{a_e},\alpha_2(r)=a_2r^{a_e},\alpha_3(r)=a_3r^{a_e}$, and $\phi(r)$ respectively, and a function $V:\mathcal{Z}\rightarrow R_{\geq0}$ such that for all $z\in\mathcal{Z}$:
	\begin{salign}
		&\alpha_1\left(\norm{z_k}\right) \leq V\left(z_k\right) \leq \alpha_2\left(\norm{z_k}\right)+d_1, \label{eq::Lyapunovfunction1}\\
		& V\left(z_{k+1}\right)-V\left(z_k\right) \leq -\alpha_3\left(\norm{z_k}\right)+\phi\left(\norm{w_k}\right)+d_2 \label{eq::Lyapunovfunction2}
	\end{salign}
	hold for all $w_k \in\mathcal{W}$, where $\norm{z_k}$ denotes any norm. As in \cite{gross2016relaxed}, the relaxation of \eqref{eq::Lyapunovfunction2} to a multi-step definition is used with $\Hj\in\mathbb{N}_{\geq1}$, such that:
	\begin{salign}
		V\left(z_{k+\Hj}\right)-V\left(z_k\right)\leq \alpha_3\left(\norm{z_k}\right)+\Hj\phi\left(\norm{w_{[k,\Hj]}}\right)+\Hj d_2 \qquad \label{eq::Lyapunovfunction3}
	\end{salign}
	needs to be proven with $\norm{w_{[k,\Hj]}}=\max\limits_{r\in\{k,\dots,k+\Hj\}}\norm{w_r}$.
\end{definition}


Defining an extended state as the triple $z_k\s{i}=\left(\mathbf{x}_k\s{i},v_k\s{i},\mathbf{u}_k\s{i}\right)$, the norm and the trajectory are given by \mbox{$\norm{z_k\s{i}}^r_{Q_z\s{i}}:= \norm{\mathbf{x}_k\s{i}}_\Qxi^r+\norm{\begin{sbm}v_k\s{i};\mathbf{u}_k\s{i}\end{sbm}}_\Qui^r$}, and \mbox{$\tilde{z}_k\s{i}=\left(\tilde{\mathbf{x}}_k\s{i},\tilde{v}_k\s{i},\tilde{\mathbf{u}}_k\s{i}\right)$} respectively. Note that the cost function \eqref{eq::cost_function} explicitly depends on $\tilde{z}_k\s{i}$, but the value function \eqref{eq::QP} simply depends on the tuple $\left(\mathbf{x}_k\s{i},\tilde{\mathbf{u}}_k\s{i}\right)$, since $\tilde{\mathbf{x}}_k\s{i}$ follows from the auxiliary condition with initial state $\mathbf{x}_k\s{i}$, and $\tilde{v}_k\s{i}$ is an optimization variable. Thus, the condition $V_k\s{i}\left(\mathbf{x}_k\s{i},\tilde{\mathbf{u}}_k\s{i}\right)=J_k\s{i}\left(\tilde{\mathbf{x}}_k\s[\star]{i},\tilde{v}_k\s[\star]{i},\tilde{\mathbf{u}}_k\s{i}\right)=J_k\s{i}\left(\tilde{z}_k\s[\star]{i}\right)=V_k\s{i}$ holds, where $\star$ denotes the solution of the optimization \eqref{eq::QP}.



\begin{theorem}
	If Asm.s \ref{ass:relibaleForecast} and \ref{ass::terminalset} hold, system \eqref{eq::augmented_model} with control law \eqref{eq::controllaw} is ISpS, if the optimization problem \eqref{eq::QP} is feasible, and if weights $Q_x\s{i}$ and $Q_u\s{i}$ are chosen such that the set $\Omega\s{i}=\setdef{z}{L\s{i}(z)=0}$ is not empty.
\end{theorem}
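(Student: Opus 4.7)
The plan is to use the optimal value function $V_k\s{i}$ of the MPC problem \eqref{eq::QP} as a candidate Lyapunov function for the extended state $z_k\s{i}=(\mathbf{x}_k\s{i},v_k\s{i},\mathbf{u}_k\s{i})$ and to verify the bounds of Definition \ref{def::ISpS} using the multi-step relaxation \eqref{eq::Lyapunovfunction3} with $\Hj=\Hc$, so that one full prediction horizon separates two consecutive Lyapunov evaluations. The disturbance driving the Lyapunov estimate is the realised neighbour deviation $\delta\mathbf{u}_k\s{i}$, which by construction lies in the communicated uncertainty set $\Delta\tilde{\mathcal{U}}_k\s{i}$ and is therefore uniformly bounded.

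First I would establish \eqref{eq::Lyapunovfunction1}. Positive (semi-)definiteness of $\Qxi$ and $\Qui$ on the components of $z_k\s{i}$ that they actually weight yields a quadratic lower bound $\alpha_1(\norm{z_k\s{i}})\le V_k\s{i}$, since the first stage cost alone already dominates $\norm{z_k\s{i}}^2$ outside $\Omega\s{i}$. For the upper bound, evaluating $J_k\s{i}$ on a feasible tuple — for instance the one induced by the terminal feedback $u_{k+l}\s{i}=K_\xi\s{i}\xi_{k+l|k}\s{i}$ provided by Assumption \ref{ass::terminalset}.2 — produces a quadratic majorant of the form $\alpha_2(\norm{z_k\s{i}})+d_1$, with the additive constant $d_1$ absorbing the terminal cost on the forced free modes and the residual mass on $\Omega\s{i}$, which by hypothesis is non-empty and in general non-trivial whenever $\Qxi$ or $\Qui$ vanish on neighbour components.

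To show the multi-step decrease, I would use the suboptimal shifted candidate $\tilde{u}_{k+1}\s{i}$ constructed in \eqref{eq::u_traj_k+1} of the feasibility theorem, iterated through $\Hc$ consecutive replanning steps. Invoking optimality of $V_{k+l}\s{i}$ against the shifted candidate's cost at each step, the telescoping of stage costs splits the total difference into four contributions: the discarded initial stages furnish the negative term $-\alpha_3(\norm{z_k\s{i}})$; Assumption \ref{ass::terminalset}.3 makes the Lyapunov inequality $A_\xi\s[\T]{i}P_\xi\s{i}A_\xi\s{i}-P_\xi\s{i}+Q_\xi\s{i}\preccurlyeq0$ dominate the terminal-cost increment along the nominal closed loop \eqref{eq:Lyap_CL}; the discrepancy between nominal and realised trajectories is bounded by a $\mathcal{K}$-function of $\delta\mathbf{u}$, giving the $\Hj\phi(\norm{w_{[k,\Hj]}})$ term of \eqref{eq::Lyapunovfunction3}; and the mass that cannot be driven out of $\Omega\s{i}$ yields the offset $\Hj d_2$. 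Assumption \ref{ass::terminalset}.4 enters here to guarantee that $\norm{\xi_{k+\Hc|k}\s{i}}^2_{P_\xi\s{i}}$ vanishes on $\Omega\s{i}$ and therefore does not spoil that offset estimate.

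The main obstacle I anticipate is the interplay between the disturbance-feedback structure prescribed by Assumption \ref{ass:relibaleForecast}.3 and the time-variability of the AoI: the candidate at $k+1$ reuses the gains $\eta_{n,m|k}\s{ij}$ designed under the predicted AoI $\aoi_{k+l|k}\s{ij}$, while the values $\aoi_{k+l}\s{ij}$ actually realised by the network are in general strictly smaller. The monotone upper-bounding property of Assumption \ref{ass:relibaleForecast}.1 is the key device ensuring that the summation indices in \eqref{eq::controllaw2} remain inside the range of known deviations, so that the cost of the shifted candidate can be bounded uniformly by a $\mathcal{K}$-function of the disturbance alone, independent of the AoI realisation. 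Once this uniform bound is in place, the remainder of the argument follows the standard robust-MPC template, e.g.\ as used in \cite{gross2014distributed,hahn2019robust}, and the induction over $k$ closes the ISpS estimate.
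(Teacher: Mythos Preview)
Your proposal matches the paper's proof in structure and substance: the optimal value $V_k\s{i}$ serves as the ISpS-Lyapunov function, the shifted candidate \eqref{eq::u_traj_k+1} from the feasibility theorem furnishes the suboptimal comparison policy, Assumption~\ref{ass::terminalset}.3 handles the terminal-cost increment via the Lyapunov inequality for $P_\xi\s{i}$, and the multi-step relaxation is closed with $\Hj=\Hc$. Two minor refinements relative to your sketch: the paper obtains $d_1=0$ (the quadratic cost vanishes exactly on $\tilde\Omega\s{i}$, so no additive offset is needed in the upper bound), and the constant $d_2$ arises specifically from the bounded appended neighbour input $\mathbf{u}_{k+\Hc|k+1}\s{i}\in\mathrm{prod}(\mathcal{U}\s{j},j\in\Ni)$ at the end of the shifted horizon, not from residual mass on $\Omega\s{i}$.
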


\begin{proof}
	In order to establish the ISpS property, the set $\mathcal{Z}\s{i}$ is chosen equal to the set in the triple $\tilde{z}_k\s{i}$ for which \eqref{eq::QP} is feasible, and the set $\tilde{\Omega}\s{i}=\Omega\s{i}\times\dots\times\Omega\s{i}$ for the extended state trajectory.
	
	To consider deviations from previously communicated predictions of neighbored subsystems, i.e. $\mathbf{u}_{k+l|k}\s{i}\neq\mathbf{u}_{k+l|k+l}\s{i}$, the difference between two predictions is defined as \mbox{$\delta z_{k+l|k}\s{i}=z_{k+l|k+1}\s{i}-z_{k+l|k}\s{i}$}, and the difference between to predicted trajectories is $\delta\tilde{z}_k\s{i}=\begin{sbm} \delta z_{k+1|k}\s[\T]{i},\dots, \delta z_{k+\Hc|k}\s[\T]{i} \end{sbm}^\T$, with $\delta z_{k+\Hc|k}\s{i}=\left(\mathbf{x}_{k+\Hc|k+1}\s{i}-\mathbf{x}_{k+\Hc|k}\s{i},0,0\right)$.
	Since cost function \eqref{eq::cost_function} is a quadratic function, which is  zero if $\tilde{z}_k\s{i}\in\tilde{\Omega}\s{i}$ (according to Asm. \ref{ass::terminalset}.4), it is straightforward to see, that there exist lower and an upper comparison functions $\alpha_1\left(\norm{\tilde{z}_k\s{i}}_{\tilde{Q}_z\s{i}}\right)$ and $\alpha_2\left(\norm{\tilde{z}_k\s{i}}_{\tilde{Q}_z\s{i}}\right)$, such that \eqref{eq::Lyapunovfunction1} holds with $d_1=0$.
	
	To proof the reduction of cost \eqref{eq::Lyapunovfunction3}, consider the  optimal cost in $k$ if \eqref{eq::QP} is feasible:
	\begin{align}
		V_k\s{i}=&\norm{\xi_{k+\Hc|k}\s{i}}^2_{P_\xi\s{i}} + \norm{\mathbf{x}_{k|k}\s{i}}^2_\Qxi + \norm{\begin{sbm}v_{k|k}\s{i};\mathbf{u}_{k|k}\s{i}\end{sbm}}^2_\Qui  \zu
		&+\sum_{l=0}^{\Hc-2} \norm{\mathbf{x}_{k+l+1|k}\s{i}}^2_\Qxi +  \norm{\begin{sbm}v_{k+l+1|k}\s{i};\mathbf{u}_{k+l+1|k}\s{i}\end{sbm}}^2_\Qui
		\label{eq::opticostsk}.
	\end{align}
	Some general cost in $k+1$ are given by:
	\begin{align}
		J&_{k+1}\s{i}=\norm{\xi_{k+\Hc+1|k+1}\s{i}}^2_{P_\xi\s{i}} + \norm{\mathbf{x}_{k+\Hc|k+1}\s{i}}^2_\Qxi \zu &+\norm{\begin{sbm}v_{k+\Hc|k+1}\s{i};\mathbf{u}_{k+\Hc|k+1}\s{i}\end{sbm}}^2_\Qui  \zu
		&+\sum_{l=0}^{\Hc-2} \norm{\mathbf{x}_{k+l+1|k+1}\s{i}}^2_\Qxi +  \norm{\begin{sbm}v_{k+l+1|k+1}\s{i};\mathbf{u}_{k+l+1|k+1}\s{i}\end{sbm}}^2_\Qui
		\label{eq::generalcostsk}.
	\end{align}
	Now, first consider the case that there is no difference between predicted inputs of neighbored subsystems, i.e. \mbox{$\mathbf{u}_{k+l|k+1}\s{i}:=\mathbf{u}_{k+l|k}\s{i}$}. Applying in $k+1$ the input sequence defined in \eqref{eq::u_traj_k+1}, the result is a trajectory of admissible triples $\tilde{z}_{k+1}\s{i}$ without difference in predictions, i.e. $\delta \tilde{z}_{k}\s{i}=0$. An upper bound for the optimal cost difference is given by:
	\begin{salign}
		V_{k+1}&\s{i}-V_k\s{i}\leq
		\norm{\mathbf{x}_{k+\Hc|k+1}\s{i}}^2_\Qxi \zu
		&+\quad \norm{\begin{sbm}v_{k+\Hc|k+1}\s{i};\mathbf{u}_{k+\Hc|k+1}\s{i}\end{sbm}}^2_\Qui 
		+	\norm{\xi_{k+\Hc+1|k+1}\s{i}}^2_{P_\xi\s{i}} \zu
		&- \quad \norm{\mathbf{x}_{k|k}\s{i}}^2_\Qxi -
		\norm{\begin{sbm}v_{k|k}\s{i}; \mathbf{u}_{k|k}\s{i}\end{sbm}}^2_\Qui-\norm{\xi_{k+\Hc|k}\s{i}}^2_{P_\xi}.
	\end{salign}
	Note the following three facts:
	\begin{enumerate}
		\item \mbox{$\mathbf{x}_{k+\Hc|k+1}\s{i}:=\mathbf{x}_{k+\Hc|k}\s{i}$}, and
		\mbox{$v_{k+\Hc|k+1}\s{i}:=K_\xi\s{i}\xi_{k+\Hc|k}\s{i}$},
		such that: $\norm{\begin{sbm}K_\xi\s{i} \xi_{k+\Hc|k}\s{i};\mathbf{u}_{k+\Hc|k+1}\s{i} \end{sbm}}^2_\Qui \leq \norm{\begin{sbm}K_\xi\s{i} \xi_{k+\Hc|k}\s{i};\mathbf{0} \end{sbm}}^2_\Qui + \norm{\begin{sbm}\mathbf{0};\mathbf{u}_{k+\Hc|k+1}\s{i} \end{sbm}}^2_\Qui$ and
		$\norm{\mathbf{x}_{k+\Hc|k}\s{i}}^2_\Qxi+\norm{\begin{sbm}K_\xi\s{i} \xi_{k+\Hc|k}\s{i};\mathbf{0} \end{sbm}}^2_\Qui=\norm{\xi_{k+\Hc|k}\s{i}}^2_{Q_\xi\s{i}}$.
		\item $\xi_{k+\Hc+1|k+1}\s{i}=A_\xi\s{i}\xi_{k+\Hc|k+1}\s{i}+B_\xi\s{i}\mathbf{u}_{k+\Hc|k+1}$ with $\xi_{k+\Hc|k+1}\s{i}:=\xi_{k+\Hc|k}\s{i}$ such that:
		$\norm{\xi_{k+\Hc+1|k+1}\s{i}}^2_{P_\xi\s{i}} \leq \norm{A_\xi\s{i} \xi_{k+\Hc|k}\s{i} }^2_{P_\xi\s{i}}+\norm{B_\xi\s{i} \mathbf{u}_{k+\Hc|k+1}\s{i}}^2_{P_\xi\s{i}}$.        
		\item $P_\xi\s{i}$ is assumed to solve the Lyapunov equation for the closed-loop system \eqref{eq::terminalsystem}, such that: \mbox{$\norm{A_\xi\s{i} \xi_{k+\Hc|k}\s{i}}^2_{P_\xi\s{i}}-\norm{\xi_{k+\Hc|k}\s{i}}^2_{P_\xi\s{i}}+\norm{\xi_{k+\Hc|k}\s{i}}^2_{Q_\xi\s{i}}\leq 0$}.
	\end{enumerate}
	Considering these facts, 
	an upper bound results according to Asm. \ref{ass::terminalset}:
	\begin{salign}
		V_{k+1}\s{i}-V_{k}\s{i}\leq &-\norm{\mathbf{x}_{k|k}\s{i}}^2_\Qxi-
		\norm{\begin{sbm}v_{k|k}\s{i};\mathbf{u}_{k|k}\s{i}\end{sbm}}^2_\Qui  \zu 
		&+\norm{\begin{sbm}\mathbf{0};\mathbf{u}_{k+\Hc|k+1}\s{i}\end{sbm}}^2_\Qui + \norm{B_\xi\s{i}\mathbf{u}_{k+\Hc|k+1}\s{i}}^2_{P_\xi\s{i}}. \label{eq::decrease_nd}
	\end{salign}
	Since $\mathbf{u}_{k+\Hc|k+1}\s{i}\in\mbox{prod}\left(\mathcal{U}\s{j},j\in\Ni\right)$ is bounded, there exists a constant $d_2$ and a comparison function $\alpha_3$ such that:
	\begin{salign}
		V_{k+1}\s{i}-V_{k}\s{i}\leq -\alpha_3\left(\norm{z_k\s{i}}_{Q_z\s{i}}\right)+d_2 \label{eq::upperbound1}.
	\end{salign}
	Now consider the general case with possible deviations of previously communicated predictions, i.e., $\delta \tilde{z}_{k}\s{i}\neq0$. Analogously to $\delta \tilde{z}_{k}\s{i}$, the difference of two predicted  terminal states is denoted by $\delta\xi_{k+l|k}\s{i}$. 
	To the norm of possible deviations from above, the triangle inequality applied for all $l\in\{0,\dots,\Hc-2\}$ leads to:
	\begin{salign}
		\norm{z_{k+l+1|k+1}\s{i}}-\norm{z_{k+l+1|k}\s{i}}\leq\norm{\delta z_{k+l+1|k}\s{i}},
	\end{salign}
	and to the state in fact 1):
	\begin{salign}
		\norm{\mathbf{x}_{k+\Hc|k+1}\s{i}}^2_\Qxi \leq\norm{\mathbf{x}_{k+\Hc|k}\s{i}}^2_\Qxi+ \norm{\delta z_{k+\Hc|k}\s{i}}^2_{Q_z\s{i}},
	\end{salign}
	and to the input in fact 1):
	\begin{salign}
		\norm{\begin{sbm}K_\xi\s{i}\xi_{k+\Hc|k+1}\s{i};\mathbf{0}\end{sbm}}_{Q_u\s{i}} \leq &\norm{\begin{sbm}K_\xi\s{i}\xi_{k+\Hc|k}\s{i};\mathbf{0} \end{sbm}}_{Q_u\s{i}}\zu &+\norm{\begin{sbm}K_\xi\s{i}\delta\xi_{k+\Hc|k}\s{i};\mathbf{0} \end{sbm}}_{Q_u\s{i}},
	\end{salign}
	and to the terminal state in fact 2):
	\begin{salign}
		\norm{A_\xi\s{i}\xi_{k+\Hc|k+1}\s{i}}^2_{P_\xi\s{i}}\leq\norm{A_\xi\s{i}\xi_{k+\Hc|k}\s{i}}^2_{P_\xi\s{i}}+ \norm{A_\xi\s{i}\delta \xi_{k+\Hc|k}\s{i}}^2_{P_\xi\s{i}}.
	\end{salign}
	Then, the upper bound \eqref{eq::upperbound1} is obtained to:
	\begin{salign}
		V_{k+1}\s{i}-V_{k}\s{i}\leq& -\alpha_3\left(\norm{z_k\s{i}}_{Q_z\s{i}}\right)+d_2+\norm{\begin{sbm}K_\xi\s{i}\delta\xi_{k+\Hc|k}\s{i};0 \end{sbm}}_{Q_u\s{i}} \zu
		& + \norm{A_\xi\s{i}\delta \xi_{k+\Hc|k}\s{i}}^2_{P_\xi\s{i}} + 
		\sum_{l=1}^{\Hc} \norm{\delta {z}_{k+l|k}\s{i}}^2_{Q_z\s{i}}.
	\end{salign}
	Since a difference in the terminal state $\delta \xi_{k+\Hc|k}\s{i}$ comprises multiple differences in the augmented states $\mathbf{x}_{k+l|k}\s{i}$, it strictly depends on the difference of predictions $\delta\tilde{z}_{k}$. Therefore, it is straightforward that there exists a comparison function $\phi$, such that:
	\begin{salign}
		V_{k+1}\s{i}-V_k\s{i}\leq -\alpha_3\left(\norm{z_k\s{i}}_{Q_z\s{i}}\right)+ \phi\left(\norm{\delta\tilde{z}_k\s{i}}_{\tilde{Q}_z\s{i}}\right) +d_2
	\end{salign}
	holds.
	
	Since there is no guarantee to receive new information in $k+1$, and possible deviations of previously communicated predictions are not known before new data arrives, the scheme needs to by applied recursively. Assuming $\Hj=\Hc\geq \bar a\s{i}$, data will be known and the decrease of the value function is upper bounded by:
	\begin{salign}
		\begin{aligned}
			V_{k+\Hc}\s{i}-V_k\s{i}\leq& -\alpha_3\left(\norm{\tilde{z}_k\s{i}}_{\tilde{Q}_z\s{i}}\right)\zu
			&+\Hc\sigma\left(\norm{\delta\tilde{z}_{[k:k+\Hc]}\s{i}}_{\tilde{Q}_z\s{i}}\right) +\Hc d_2,
		\end{aligned}
		\label{eq::upperboundend}
	\end{salign}
	which complies to \eqref{eq::Lyapunovfunction3} and implies the ISpS property.
\end{proof}

\section{Simulation Example}
\label{sec:simulation}

In this section, the presented framework is applied to the example of 3 autonomous vehicles (subsystems denoted by S1 to S3 hereafter) which move as a platoon. The common control objective consists of minimizing the distances between the vehicles, while avoiding collision. It is assumed that the vehicles communicate over wireless channels, and that the tasks of the centralized network controller are either performed by one of the vehicles itself, or by a road side unit. Either way, the platoon is organized in such a way that each subsystem (corresponding to an agents in the \CNET) only requires data from its immediate predecessor. Hence, S1 acts autonomously, S2 gets data from S1 (with AoI $\aoi^{2,1}_k$), and S3 receives data from S2 (with AoI $\aoi^{3,2}_k$), as illustrated in \figref{fig::com::setup}.

\begin{figure}[tb]
	\centering
	\begingroup%
	\makeatletter%
	\newcommand*\fsize{\dimexpr\f@size pt\relax}%
	\newcommand*\lineheight[1]{\fontsize{\fsize}{#1\fsize}\selectfont}%
	\ifx\svgwidth\undefined%
	\setlength{\unitlength}{226.77165354bp}%
	\ifx\svgscale\undefined%
	\relax%
	\else%
	\setlength{\unitlength}{\unitlength * \real{\svgscale}}%
	\fi%
	\else%
	\setlength{\unitlength}{\svgwidth}%
	\fi%
	\global\let\svgwidth\undefined%
	\global\let\svgscale\undefined%
	\makeatother%
	\begin{picture}(1,0.16915594)%
		\lineheight{1}%
		\setlength\tabcolsep{0pt}%
		\put(0,0){\includegraphics[width=\unitlength,page=1]{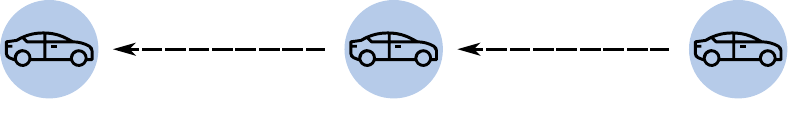}}%
		\put(0.28146616,0.06665409){\color[rgb]{0,0,0}\makebox(0,0)[t]{\lineheight{1.25}\smash{\begin{tabular}[t]{c}
						data
		\end{tabular}}}}%
		\put(0.71896621,0.06665409){\color[rgb]{0,0,0}\makebox(0,0)[t]{\lineheight{1.25}\smash{\begin{tabular}[t]{c}
						data
		\end{tabular}}}}%
		\put(0.93874189,0.00063501){\color[rgb]{0,0,0}\makebox(0,0)[t]{\lineheight{1.25}\smash{\begin{tabular}[t]{c}
						S1
		\end{tabular}}}}%
		\put(0.50124189,0.00063501){\color[rgb]{0,0,0}\makebox(0,0)[t]{\lineheight{1.25}\smash{\begin{tabular}[t]{c}
						S2 / $\aoi^{2,1}_k$
		\end{tabular}}}}%
		\put(0.06374189,0.00063501){\color[rgb]{0,0,0}\makebox(0,0)[t]{\lineheight{1.25}\smash{\begin{tabular}[t]{c}
						S3 / $\aoi^{3,2}_k$
		\end{tabular}}}}%
	\end{picture}%
	\endgroup%
	\vspace{3mm}
	\caption{Structure of the platoon example.}
	\label{fig::com::setup}
\end{figure}

Given the goal of \textit{robust} control of the platoon in the \CSYS, the forecasts predicted by the network controller in the \CNET\ have to be reliable according to assumption \ref{ass:relibaleForecast}.1.
This is accomplished by 2 minor tweaks to the \CNET\ model: First, the DTMC $\{s_k\}$ (which governs the quality of the links) is modeled with deterministic transition probabilities. In particular, each state of $\{s_t\}$ dictates the two mean transition probabilities $\e{p}_k^2$ and $\e{p}_k^3$ for the transmission between S1$\,\to\,$S2 and S2$\,\to\,$S3, respectively, as illustrated in \figref{fig::DTMC}.

Secondly, the following minor deviation from the MPC paradigm for the network controller is considered: Instead of optimizing for the entire horizon without regard of the previously calculated optimal control trajectory, the network controller only optimizes for the last step of the horizon, while all prior controls are given by the previous trajectory. Though this constrains the control of the \CNET, it enables the strict robustness result for the control of the \CSYS\ from the previous section.

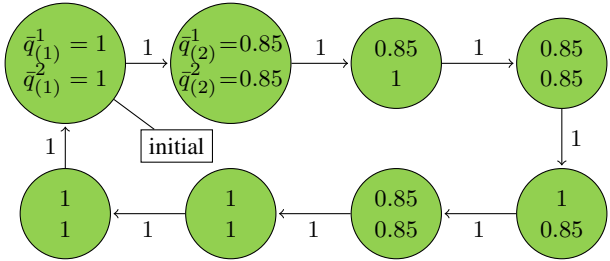
\begin{figure}
	\centering
	\small
	\begin{tikzpicture}[xscale = 0.1, yscale = -0.1]
		\tikzset{state/.style={minimum width=12mm, draw, circle,
				fill=richards_1A_grun}};
		
		\node[state, minimum width=16mm] (s1) at (0,0) {};
		\node at (0,0) {\shortstack{ 
				$\e{q}_{(1)}^1 = 1$ 
				\\[-1mm] 
				$\e{q}_{(1)}^2 = 1$ }};
		\node[state, minimum width=16mm] (s2) at (22,0) {};
		\node at (22,0) {\shortstack{
				\thickmuskip=1mu $\e{q}_{(2)}^1 =0.85$ 
				\\[-1mm] 
				\thickmuskip=1mu$\e{q}_{(2)}^2 =0.85$
		}};
		\node[state] (s3) at (44,0) {\shortstack{ $0.85$ \\[1mm] $1$ }};
		\node[state] (s4) at (66,0) {\shortstack{ $0.85$ \\[1mm] $0.85$ }};
		\node[state] (s5) at (66,20) {\shortstack{ $1$ \\[1mm] $0.85$ }};
		\node[state] (s6) at (44,20) {\shortstack{ $0.85$ \\[1mm] $0.85$ }};
		\node[state] (s7) at (22,20) {\shortstack{ $1$ \\[1mm] $1$ }};
		\node[state] (s8) at (0,20) {\shortstack{ $1$ \\[1mm] $1$ }};

		\draw[every loop]
		(s1) edge[auto] node {$1$} (s2)
		(s2) edge[auto] node {$1$} (s3)
		(s3) edge[auto] node {$1$} (s4)
		(s4) edge[auto] node {$1$} (s5)
		(s5) edge[auto] node {$1$} (s6)
		(s6) edge[auto] node {$1$} (s7)
		(s7) edge[auto] node {$1$} (s8)
		(s8) edge[auto] node {$1$} (s1);
		
		\node[draw, rectangle, anchor = west] (ini) at (10,11) {initial};    
		
		\draw (ini) edge (s1);
	\end{tikzpicture}
	\caption{DTMC $\{s_k\}$ governing the values of the mean transmission success probabilities by assigning $\e{p}_k^1 = \e{q}_{(s_k)}^1$ (top entry) and $\e{p}_k^2 = \e{q}_{(s_k)}^2$ (bottom entry).}
	\label{fig::DTMC}
\end{figure}

\begin{figure}
	\pgfplotsset{compat=1.3}
	\begin{tikzpicture}[x=1cm,y=1cm]
		\pgfplotsset{
			every axis legend/.append style={
				at={(0.5,1)},
				anchor=south
		}}
		
		\begin{axis}[
			font=\small,
			at={(0cm, 0mm)},
			anchor={north west},
			width = 88mm,
			height = 30 mm,
			xticklabels={,,},
			ylabel=AoI $\aoi^{2,1}_k$,
			xmin = 0, xmax = 80,
			ymin = 0.5, ymax = 4.5,
			grid = major,
			legend columns = 2
			]

			\addplot[name path=B,const plot,color=janniks_1A_rot] table[x expr=(\thisrow{t}+0), y expr=(\thisrow{pred1}+0)] {simdata/richard_data.txt};
			\addplot[name path=A,const plot,color=janniks_1A_blau]  table[x=t, y=aoi1] {simdata/richard_data.txt};
			\addplot[name path=C,const plot,draw=none]  coordinates{(1,0)(80,0)};
			
			\addplot[janniks_1A_blau!100] fill between[of=A and C];
			\addplot[janniks_1A_rot!20] fill between[of=A and B];
			\legend{ {prediction / forecasts},{true AoI} }
		\end{axis}
		
		\begin{axis}[
			font=\small,
			at={(0cm, -17mm)},
			anchor={north west},
			width = 88mm,
			height = 30 mm,
			ylabel=AoI $\aoi^{3,2}_k$,
			xmin = 0, xmax = 80,
			ymin = 0.5, ymax = 4.5,
			grid = major,
			legend columns = 3
			]
			
			\addplot[name path=B,const plot,color=janniks_1A_rot] table[x expr=(\thisrow{t}+0), y expr=(\thisrow{pred2}+0)] {simdata/richard_data.txt};
			\addplot[name path=A,const plot,color=janniks_1A_blau]  table[x=t, y=aoi2] {simdata/richard_data.txt};
			\addplot[name path=C,const plot,draw=none]  coordinates{(1,0)(80,0)};
			
			\addplot[janniks_1A_blau!100] fill between[of=A and C];
			\addplot[janniks_1A_rot!20] fill between[of=A and B];
		\end{axis}

		\begin{axis}[
			font=\small,
			at={(0cm,-44mm)},
			anchor={north west},
			width = 88mm,
			height = 70mm,
			xlabel={time-step $k$},
			ylabel={Distance to reference of S1 [\SI{}{\meter}]},
			xmin = 0, xmax = 80,
			ymin = -0.01, ymax = 13,
			grid = major,
			legend columns = 5
			]
			\addplot[color=janniks_1A_blau,mark=*,mark size=0.5pt]  table[x=t, y=y1] {simdata/jannik_data.txt};
			\addplot[color=janniks_1A_blau,mark=*,mark size=3pt, restrict x to domain = 71:71]  table[x=t, y=y1] {simdata/jannik_data.txt};
			
			
			\addplot[color=janniks_1A_grun,mark=*,mark size=0.5pt] table[x=t, y=y2_PNC] {simdata/jannik_data.txt};
			\addplot[color=janniks_1A_grun,mark=triangle*,mark size=4pt,restrict x to domain=73:73] table[x=t, y=y2_PNC] {simdata/jannik_data.txt};
			\addplot[color=janniks_1A_rot,mark=*,mark size=0.5pt] table[x=t, y=y2_WC] {simdata/jannik_data.txt};
			\addplot[color=janniks_1A_rot,mark=triangle*,mark size=4pt, restrict x to domain=70:70] table[x=t, y=y2_WC] {simdata/jannik_data.txt};
			
			
			\addplot[color=janniks_1A_grun,mark=*,mark size=0.5pt, dashed] table[x=t, y=y3_PNC] {simdata/jannik_data.txt};
			\addplot[color=janniks_1A_grun,mark=square*,mark size=03pt,restrict x to domain=73:73, dashed] table[x=t, y=y3_PNC] {simdata/jannik_data.txt};
			\addplot[color=janniks_1A_rot,mark=*,mark size=0.5pt, dashed] table[x=t, y=y3_WC] {simdata/jannik_data.txt};
			\addplot[color=janniks_1A_rot,mark=square*,mark size=03pt,restrict x to domain=70:70, dashed] table[x=t, y=y3_WC] {simdata/jannik_data.txt};
			
			\legend{ , S1, , {S2,fc}, ,  {S2,wc} ,, {S3,fc},, {S3,wc}}
			
		\end{axis}		
	\end{tikzpicture}
	\caption{Predicted and actual values for the AoI (top), and distance of vehicles to the reference of S1 with unpredicted setpoint change of S1 at $k=20$ with use of the forecasts (fc) in constrast to the use of a worst case delay (wc) (bottom).} 
	\label{fig::simulation}
\end{figure}
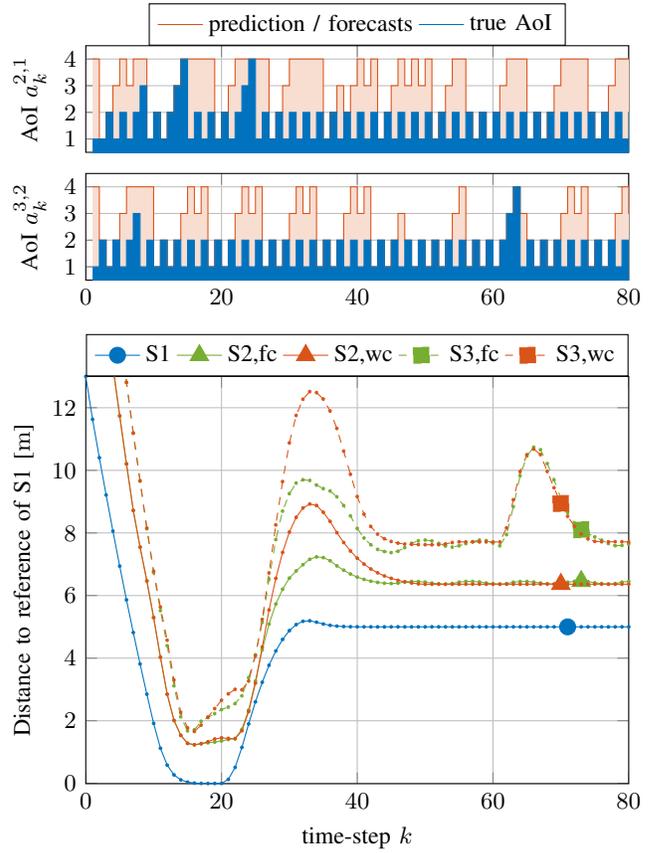
Doing so, the evolution of the AoI is shown in \figref{fig::simulation} (top) for a random realization of the link transmissions variables $p_k^1$ and $p_k^2$. The blue-shaded area represents the actual AoI $\bar\aoi\s{2}_k$ and $\bar\aoi\s{3}_k$, that is at least 1 and at most 4. Additionally, the red-shaded area represents the predicted AoI. Not quite visible is the fact that the red-shaded bars always include the blue ones.

The local dynamics \eqref{eq::state_equation} of the subsystems S1, S2, and S3 are chosen here to be parameterized identically with:
\begin{align}
	A\s{i}=\begin{sbm}
		1 & 0.3\\
		0 & 1
	\end{sbm},\
	B\s{i}=\begin{sbm}
		0.045\\
		0.3
	\end{sbm},\ i\in\mathcal{N}=\{1,2,3\}.
\end{align}
The state $x_k\s{i}=\begin{sbm}x_{1,k}\s{i} & x_{2,k}\s{i}\end{sbm}^\T$ consists of the position $x_{1,k}\s{i}$ and velocity $x_{2,k}\s{i}$. The input is given by the acceleration $u_k\s{i}$, 
constraint to  $\norm{u_k\s{1}}\leq 1.98$, $\norm{u_k\s{2}}\leq3$, and \mbox{$\norm{u_k\s{3}}\leq5$}.
The first vehicle S1 follows an internal reference, such that the focus is on the behavior of S2 and S3, with augmented states \mbox{$\mathbf{x}_k\s[\T]{i}=\begin{sbm}x_k\s[\T]{i} & x_k\s[\T]{i-1}\end{sbm}, i\in\{2,3\}$}. The weights of the cost functions are chosen to \mbox{$\Qxi=\begin{sbm}
		Q & -Q\\ -Q & Q
	\end{sbm}$}, with \mbox{$Q=\diag{}{5,1}$}, and \mbox{$\Qui=\begin{sbm}
		0.1 & -0.1\\
		-0.1 & 0.1
	\end{sbm}$} respectively. 
Note, that $Q_x\s{i}$ and $Q_u\s{i}$ are indefinite, but a decrease of stage cost in direction of all local states and inputs is still guaranteed.
To avoid collisions, the augmented state is constrained to $0\leq x_{1,k}\s{i-1}-x_{1,k}\s{i} \leq 200$. According to \figref{fig::simulation} (top), the maximum possible AoI is $4$. Hence, the terminal control laws need to compensate $4$ time-steps and are chosen by \mbox{$u_k\s{2}=\begin{sbm} -0.03& -0.54& 0.03& 0.54 \end{sbm}\textbf{x}_{k-4}\s{2}$}, and \mbox{$u_k\s{3}=\begin{sbm} -0.06& -0.6& 0.06& 0.6 \end{sbm}\textbf{x}_{k-4}\s{3}$}.
With a control horizon of \mbox{$\Hc=8$}, the simulation starts at $k=0$ with initial states $x_0\s{ref}=[0\ 5]^\T$, $x_{0}\s{1}=[-13\ 5]^\T$, $x_{0}\s{2}=[-20\  5]^\T$, and $x_{0}\s{3}=[-25\  5]^\T$, where $ref$ denotes the internal reference of S1. This reference corresponds to accelerating with a constant value of $u_k\s{ref}=1$. The vehicles S2 and S3 minimize the distance to their predecessor, and S1 to its reference respectively.

In $k=20$, 
an unexpected change of the setpoint of S1 is modeled to keep a constant distance of 5 meters to the reference. Thus, S1 suddenly decelerates and deviates from the prediction, which was communicated to S2 before (but remains inside the bounds predicted at $k=19$). As soon as S2 receives the information of this deviation, it reacts with deviating from its own plan previously predicted and communicated to S3 (but again within the communicated bounds). This scheme repeats for S3. Figure~\ref{fig::simulation} shows the distance of all subsystems with respect to the reference of S1. The control results differ with respect to the use of the forecasted age of information (fc) compared to the use of the worst case age (wc). The trajectory of the last subsystem is shown as dashed line, the trajectory of the first subsystem as a blue solid line.
At $k=60$, a significant delay of sending information from S2 to S3 in the communication network is modeled (see \figref{fig::simulation} top), such that S3 has to decelerate and to keep more distance to S2. This  results in the same behavior according to the use of the worst case age of information. Prior delays in the communication network have lower effect caused by general transient effects of the distributed plant.

The figure shows that the maximum distance of the platoon can be reduced significantly for S2 and S3 by using the forecasts within the predictive vehicle controllers. For the considered simulation, the distance of the platoon (measured from S1 to S3) caused by the setpoint change of S1 is reduced by over $37\%$, from \SI{7.5}{\meter} to \SI{4.7}{\meter}, still avoiding collisions in a robust way. This improvement of performance mainly results from the reduced predicted age of information, and the less restrictive constraints constructed thereon in the subsystem controllers.

\section{Conclusions}
\label{sec:conclusion}

A new control scheme for a class of cyber-physical system has been presented, which interweaves the control strategy of the communication system with the control strategy of the distributed plant. As a main result, robust stability in the sense of ISpS has been proven for the MPC scheme of the distributed plant, given the assumption that the predictive network controller provides  reliable forecasts of the age of information. 

The principle of controlling the communication network by a predictive control law offers improvements in two further respects: First, foreseen variations of the communication network can be used to decrease the communication delay. Secondly, the incorporation of future communication delays (which are a by-product of the predictive network control) decreases the conservativeness of the robust DMPC scheme for plant control. The amount of improvement is difficult to quantify in a general sense, but strongly depends on the difference of predicted delay to the upper bound of possible delays. 
The presented methodology is relevant for all applications for which a controlled distributed plant interacts through a communication network with time-varying communication delay. With respect to 5G communication, this applies to many cyber-physical systems, including all autonomous applications of driving and robotics.

Future work will extend the considerations to a framework in which the distributed plant control will consider the age of information in stochastic representation.  
\bibliographystyle{IEEEtran}
\bibliography{references_short}

\newpage

This work has been submitted to the IEEE for possible publication. Copyright may be transferred without notice, after which this version may no longer be accessible.

\end{document}